\documentclass[%
 reprint,
 amsmath,amssymb,
 aps,
]{revtex4-2}
\usepackage{amsthm}
\usepackage{mathrsfs}
\usepackage{subcaption}
\usepackage{placeins}

\usepackage{graphicx}
\usepackage{dcolumn}
\usepackage{bm}

\newtheorem{theorem}{Theorem}\setcounter{theorem}{-1}

\newtheorem{corollary}[theorem]{Corollary}

\newtheorem{lemma}[theorem]{Lemma}

\newcommand{\lc}[1]{\tilde{#1}}

\usepackage{hyperref}

\begin{document}

\title{Observables for cyclic causal set cosmologies}

\author{Fay Dowker${}^{a,b}$}
\author{Stav Zalel${}^{a}$}%
\thanks{Corresponding author: stav.zalel11@imperial.ac.uk}
\affiliation{%
${}^{a}$Blackett Laboratory, Imperial College London, SW7 2AZ, U.K.\\
${}^{b}$Perimeter Institute, 31 Caroline Street North, Waterloo ON, N2L 2Y5, Canada.}%

\date{\today}

\begin{abstract}
In causal set theory, cycles of cosmic expansion and collapse are modelled by causal sets with  ``breaks'' and ``posts''  and a special role is played by cyclic dynamics in which the universe goes through perpetual cycles. We identify and characterise two algebras of observables for cyclic dynamics in which the causal set universe has infinitely many breaks. The first algebra is constructed from the cylinder sets associated with finite causal sets that  have a single maximal element and offers a new framework for defining cyclic dynamics as random walks on a novel tree.  The second algebra is generated by a collection of stem-sets and offers a physical interpretation of the observables in these models as statements about unlabeled stems with a single maximal element. There are analogous theorems for cyclic dynamics in which the causal set universe has infinitely many posts.
\end{abstract}

\maketitle

\section{Introduction}
Within causal set theory, sequential growth dynamics form a concrete theory space in which to explore ideas for cosmology \cite{Rideout:1999ub,Dowker:2020qqs,Brightwell:2002vw,Dowker:2005gj,Dowker:2005gj,Dowker:2014xga,Ahmed:2009qm,Sorkin:1998hi,Martin:2000js,Dowker:2017zqj,Dowker:2019qiz,Zalel:2020oyf,Bento:2021voo}. 

In this work, we explore a class of sequential growth dynamics that is of particular interest to cosmology: \textit{cyclic} sequential growth  models, in which the causal set (causet) universe goes through perpetual cycles of expansion and contraction punctuated by \textit{breaks}, where a break is an ordered partition $(\lc{A},\lc{B})$ of the causet that  satisfies $a\prec b \ \forall \ a\in \lc{A}, b\in \lc{B}$.  A subclass of causets with infinitely many breaks is that of causets with infinitely many \textit{posts}, where a post is a single causet element with a break immediately below it and a break immediately above it. An illustration is shown in Fig.\ref{postbreak}.

\begin{figure}[h]
    \centering
    \includegraphics[width=0.35\textwidth]{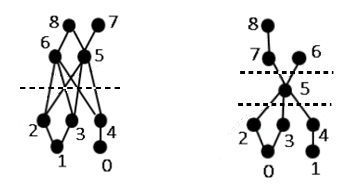}
    \caption{An illustration of breaks and posts. The dotted lines highlight the boundary between the past and the future of the breaks. Left: a causal set with a single break. The subcauset below (above) the dotted line is the past (future) of the break. Right: a causal set with two breaks and a post—the element 5—between them.}%
    \label{postbreak}%
\end{figure}

Cyclic models play a central role in the causal set cosmological paradigm, a heuristic that aims to explain the emergence of a flat, homogeneous and isotropic cosmos directly from the quantum gravity era \cite{Sorkin:1998hi,Martin:2000js,Dowker:2017zqj}. This proposal is one of several in a recent trend to develop theories of cyclic or bouncing cosmologies and determine their implications for fundamental physics. In this work, we take a step towards advancing this school of thought within causal set theory. The cyclic models which we study are those dynamics which give rise to cyclic universes and our aim is to investigate the question: what are the physical observables (covariant events) in these cyclic models?

We note that the cosmological paradigm within which we will be working pertains only to the causal set spacetime, not to any matter living on it. Whether a causal set is enough to give rise to matter degrees of freedom \cite{Rideout:1999ub} or whether one requires additional structure such as a field living on the causal set \cite{X:2017jal,Glaser:2020yfy} is still unknown. Whichever the case may be, this simplified cosmological paradigm will act as a guide to building a causal set cosmology.

In section \ref{sec:background} we review the concepts of sequential growth models, events and covariant events and known results. 
 In section \ref{chap_6_o_subsec} we identify a $\sigma$-algebra $\mathcal{R}_b$ that  forms a complete set of covariant events in cyclic dynamics. We prove that $\mathcal{R}_b$ can be constructed using a particular subcollection of the cylinder sets and discuss the implications of the result for the search for classical and quantum cyclic dynamics. However, we note that the physical interpretation of events in $\mathcal{R}_b$ remains obscure and in section \ref{sec2} we identify a second $\sigma$-algebra of observables, $\mathcal{R}(\widehat{\mathcal{S}})$  that  endows each observable in a cyclic dynamics with a physical interpretation. We prove that this physically meaningful algebra exhausts the 
algebra of covariant events in a well defined way. 
We will consider in detail the general cyclic dynamics defined above in which the infinitely many epochs (cycles) are separated by breaks. At the end of the paper, in section \ref{sec:posts} we describe how our results carry over \textit{mutatis mutandis} to the special case where the epochs are separated by posts.

\section{Background}\label{sec:background}
\subsection{Sequential growth dynamics and covariant events}\label{sec:sequential}

A sequential growth dynamics is a probability space $(\lc{\Omega},\lc{\mathcal{R}},\mu)$. 
$\lc{\Omega}$ is the set of causal sets (causets, for short) on the ground-set $\mathbb{N}$ that satisfy $x\prec y\implies x<y$ for all $x,y\in\mathbb{N}$, where $\prec$ denotes the partial ordering. Let $\lc{C}_n$ denote a causet on ground-set $[0,n-1]$ satisfying $x\prec y\implies x<y$. For each $\lc{C}_n$ there is the cylinder set, 
\begin{align} cyl(\lc{C}_n):=\{\lc{C}\in \lc{\Omega}\, |\, \lc{C}|_{[0,n-1]}=\lc{C}_n\},
\end{align} 
where $\lc{C}|_{[0,n-1]}$ denotes the restriction of $\lc{C}$ to ${[0,n-1]}$. 
$\lc{\mathcal{R}}$ is the $\sigma$-algebra generated by the cylinder sets. 

The measure $\mu$ is the extension--via the Fundamental Theorem of Measure Theory \cite{Kolmogorov:1975}--of the measure on the semi-ring of cylinder sets given by a random walk on \textit{labeled poscau}. Labeled poscau is a directed tree in which each $\lc{C}_n$ is a node and $\lc{D}_m\prec \lc{C}_n \iff \lc{D}_m$ is a stem (a finite down-set) in $\lc{C}_n$ (Fig.\ref{labeled_poscau}). Then  ${\mu}(cyl(\lc{C}_n))=\mathbb{P}(\lc{C}_n)$, where $\mathbb{P}(\lc{C}_n)$ is the probability that the random walk goes through the node $\lc{C}_n$ \cite{Rideout:1999ub,Dowker:2019qiz}.

\begin{figure}[htpb]
\centering
    \includegraphics[width=.3\textwidth]{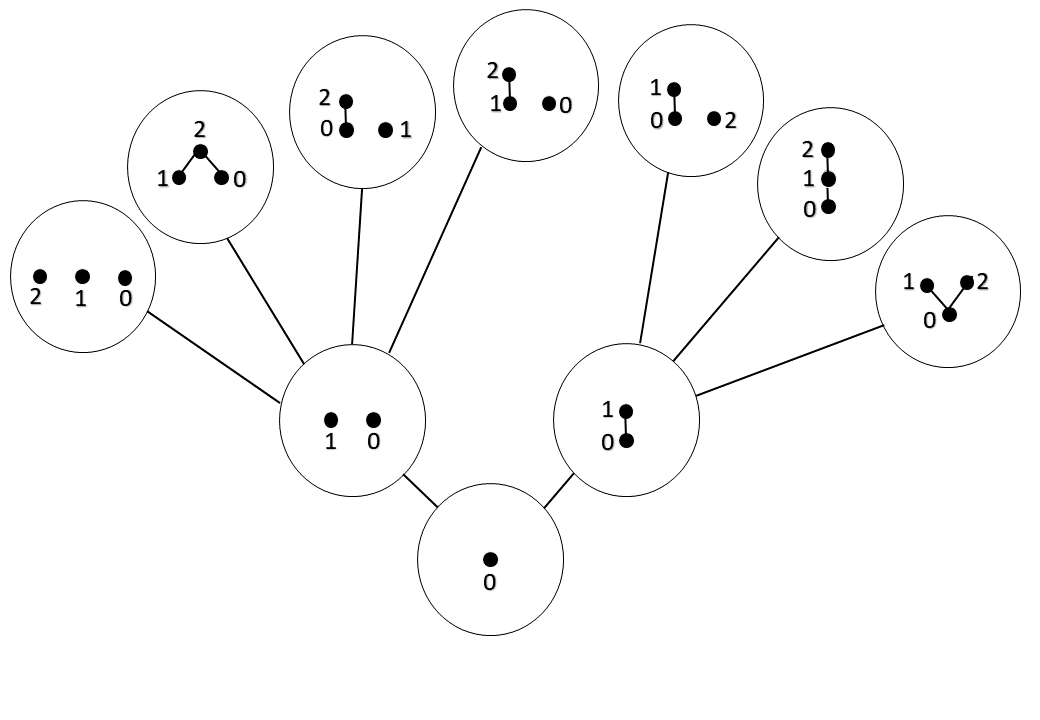}
  \caption[Labeled poscau]{The first three levels of labeled poscau.}\label{labeled_poscau}
\end{figure}

The algebra of \textit{covariant} events, $\mathcal{R}$, is a subalgebra of $\lc{\mathcal{R}}$: 
\begin{align}\mathcal{R}:=\{\mathcal{E}\in \lc{\mathcal{R}}|\lc{C}\in \mathcal{E} \text{ and } \lc{C}\cong \lc{C}' \implies \lc{C}'\in \mathcal{E}\},\end{align} where $\cong$ denotes equivalence under order-isomorphism. 

Within this framework for the dynamics of a discrete universe, the events in $\mathcal{R}$ are the physical ``observables'' (or beables)\cite{Brightwell:2002vw}. Each event $\mathcal{E}$ in $\mathcal{R}$ corresponds to the question ``Does $\mathcal{E}$ happen?''  to which the measure responds: ``Yes, with probability $\mu(\mathcal{E})$'' (or ``Almost surely no'' if $\mu(\mathcal{E})=0$). 

\subsection{Rogues}

Looking closely,  one finds that a generic event in $\mathcal{R}$ has no obvious physical interpretation. 
However there exists a strictly smaller $\sigma$-algebra whose elements do have a clear physical meaning. Let $C_n$ denote an \textit{unlabeled causet} (or \textit{order}), \textit{i.e.}  the order-isomorphism equivalence class of which the causet $\lc{C}_n$ is a representative. For each $C_n$, define the stem-set,
\begin{align}\label{18073}
stem(C_n):=\bigcup cyl(\lc{D}_m),
\end{align}
where the union is over causets $\lc{D}_m$ that  contain a stem that  is order-isomorphic to $\lc{C}_n$, for all $m$ (see Fig.\ref{stemfig} for an illustration of stem).
Let $\mathcal{S}$ and $\mathcal{R}(\mathcal{S})$ denote the set of stem-sets for all $n$ and the $\sigma$-algebra generated by $\mathcal{S}$, respectively. The elements of $\mathcal{S}$ have a comprehensible physical meaning: they correspond to countable logical combinations of statements like ``the causet has a stem 
isomorphic to a representative of the finite order $C_n$.''  $\mathcal{R}(\mathcal{S})$ is the prime example of a physically comprehensible subalgebra of the covariant event algebra $\mathcal{R}$.

\begin{figure}[htpb]
\centering
    \includegraphics[width=.3\textwidth]{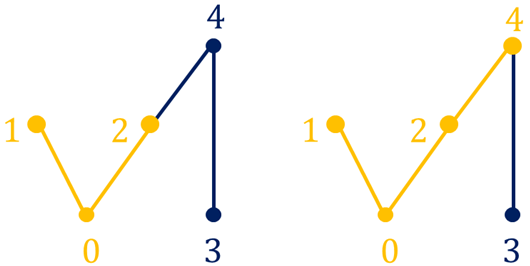}
  \caption{An illustration of the concept of stem. Left: the sub-causet highlighted in yellow is a stem (down-set) because it contains its own past. Right: the sub-causet highlighted in yellow is \textit{not} a stem because it contains the element 4 but it does not contain the element 3 which lies to the past of 4.}\label{stemfig}
\end{figure}

What physics does $\mathcal{R}(\mathcal{S})$ leave out, what physical information is not captured by the stem questions? The answer, almost by definition, is that stem questions cannot distinguish between two causets  that have the same stems. We call an infinite causet $\lc{U}\in\lc{\Omega}$ a \textit{rogue} if there exists some $\lc{V}\not\cong \lc{U}$ such that $\lc{V}\in stem(C_n)$ if and only if $\lc{U}\in stem(C_n)$, for all $stem(C_n)\in\mathcal{S}$. Let $\Theta$ denote the set of rogues.  
In \cite{Brightwell:2002yu, Brightwell:2002vw} it was proved that $\Theta$ is measurable (\textit{i.e.} $\Theta\in\mathcal{R}$) and indeed that $\Theta\in\mathcal{R(S)}$. The main theorem of \cite{Brightwell:2002vw} is that
 \textit{for every event $\mathcal{E}\in\mathcal{R}$, there is an event $\mathcal{E}'\in\mathcal{R}(\mathcal{S})$ such that $\mathcal{E}\triangle\mathcal{E}'\subset\Theta$}. 
It is in this precise technical sense that the rogues make up the difference between the covariant events and the stem events (elements of the stem event algebra 
$\mathcal{R}(\mathcal{S})$).

We can go a little further and prove 
\begin{lemma} \label{lemma0}
Let $ \mathcal{E}\in\mathcal{R}$. Then $\mathcal{E} \cap \Theta^c \in \mathcal{R}(\mathcal{S})$ and $\mathcal{E} \cup \Theta \in \mathcal{R}(\mathcal{S})$, 
where the superscript ${}^c$ denotes complement in $\lc{\Omega}$. 
\end{lemma}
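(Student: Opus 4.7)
The plan is to leverage directly the main theorem of \cite{Brightwell:2002vw} stated just above: for every $\mathcal{E}\in\mathcal{R}$ there exists some $\mathcal{E}'\in\mathcal{R}(\mathcal{S})$ with $\mathcal{E}\triangle\mathcal{E}'\subseteq\Theta$, together with the fact (also recalled above) that $\Theta\in\mathcal{R}(\mathcal{S})$. Fixing such an $\mathcal{E}'$ for the given $\mathcal{E}$, both claims reduce to showing that $\mathcal{E}$ and $\mathcal{E}'$ agree modulo rogues in the strong sense that $\mathcal{E}\cap\Theta^c=\mathcal{E}'\cap\Theta^c$ and $\mathcal{E}\cup\Theta=\mathcal{E}'\cup\Theta$.

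First I would establish the equality $\mathcal{E}\cap\Theta^c=\mathcal{E}'\cap\Theta^c$ by a short pointwise argument: any $\lc{C}\in\mathcal{E}\cap\Theta^c$ is, by the containment $\mathcal{E}\triangle\mathcal{E}'\subseteq\Theta$, necessarily in $\mathcal{E}'$ (otherwise it would be a non-rogue element of $\mathcal{E}\triangle\mathcal{E}'$), and symmetrically for the reverse inclusion. Since $\mathcal{E}'\in\mathcal{R}(\mathcal{S})$ and $\Theta^c\in\mathcal{R}(\mathcal{S})$ (because $\mathcal{R}(\mathcal{S})$ is a $\sigma$-algebra and $\Theta\in\mathcal{R}(\mathcal{S})$), the intersection $\mathcal{E}'\cap\Theta^c$ lies in $\mathcal{R}(\mathcal{S})$, which gives the first claim.

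The second claim follows by the same style of argument: I would check that $\mathcal{E}\cup\Theta=\mathcal{E}'\cup\Theta$ by splitting on whether a point lies in $\Theta$ (in which case it belongs to both sides trivially) or lies outside $\Theta$ (in which case $\mathcal{E}$- and $\mathcal{E}'$-membership coincide by the first step). Then $\mathcal{E}'\cup\Theta\in\mathcal{R}(\mathcal{S})$ because $\mathcal{R}(\mathcal{S})$ is closed under finite unions.

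There is no real obstacle here; the lemma is essentially a repackaging of the Brightwell--Dowker--Garc\'ia--Henson--Sorkin theorem using the additional information that $\Theta$ itself is a stem event. The only thing to be careful about is making explicit the elementary set-theoretic identities $A\cap B^c=A'\cap B^c$ and $A\cup B=A'\cup B$ whenever $A\triangle A'\subseteq B$, so that measurability of the resulting sets is immediate from $\mathcal{E}'\in\mathcal{R}(\mathcal{S})$ and $\Theta\in\mathcal{R}(\mathcal{S})$.
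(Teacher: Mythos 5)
Your proof is correct and follows essentially the same route as the paper's: both rest on the Brightwell et al.\ theorem together with the fact that $\Theta\in\mathcal{R}(\mathcal{S})$, plus elementary set identities. The only (immaterial) difference is that you apply the theorem once to $\mathcal{E}$ itself and deduce both claims from the resulting $\mathcal{E}'$, whereas the paper applies it separately to $\mathcal{E}\cap\Theta^c$ and to $\mathcal{E}\cup\Theta$ and exploits that these sets are respectively disjoint from, and contain, $\Theta$.
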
 
\begin{proof} 
Define $\mathcal{F}:= \mathcal{E} \cap \Theta^c \in \mathcal{R}$. By 
the theorem mentioned above there is some $\mathcal{F}' \in 
\mathcal{R}(\mathcal{S})$ such that $(\mathcal{F}\setminus \mathcal{F}' )\cup (\mathcal{F}'\setminus \mathcal{F}) \subset\Theta$. This implies $ \mathcal{F}'\setminus \mathcal{F} \subset\Theta$. Since 
$\mathcal{F} \cap  \Theta = \emptyset$, we have 
$\mathcal{F} = \mathcal{F}'\setminus \Theta$. And since $\Theta \in \mathcal{R}(\mathcal{S})$, 
we have $\mathcal{E} \cap \Theta^c \in \mathcal{R}(\mathcal{S})$.\\

Now define  $\mathcal{G}:= \mathcal{E} \cup \Theta \in \mathcal{R}$. Then there is a $\mathcal{G}' \in 
\mathcal{R}(\mathcal{S})$ such that $(\mathcal{G}\setminus \mathcal{G}' )\cup (\mathcal{G}'\setminus \mathcal{G}) \subset\Theta$. This implies $\mathcal{G}\setminus \mathcal{G}' \subset \Theta$. And so 
$\mathcal{G} = \mathcal{G}'\cup \Theta$. And since $\Theta \in \mathcal{R}(\mathcal{S})$, 
we have $\mathcal{E} \cup \Theta \in \mathcal{R}(\mathcal{S})$.\\

\end{proof}
Thus, removing the rogues from a covariant event turns it into a stem event. And adding all the rogues to a covariant event turns it into a stem event. This motivates the defining of another algebra that will have a direct analogue when we come to discuss cyclic dynamics in the next section. Let $\mathcal{R}_{\Theta}$ be the 
 the $\sigma$-algebra of all covariant events that either contain all the rogues or contain no rogues:
 \begin{align}\label{roguealgebra}\mathcal{R}_{\Theta}:=\{\mathcal{E}\in\mathcal{R}\,|\,\Theta\subseteq\mathcal{E} \text{ or } \Theta\subseteq\mathcal{E}^c\}\,.\end{align}
  In any dynamics in which the rogues have measure zero, i.e. 
 $\mu(\Theta)=0$, 
$\mathcal{R}_{\Theta} $ is then a sort of doubled physical event algebra 
 \begin{align}\label{roguealgebra2}\mathcal{R}_{\Theta}=\left(\mathcal{R}\cap \Theta^c \right)
 \sqcup \left(\mathcal{R}\cup \Theta\right)
\end{align} where  
 \begin{align}\label{roguealgebra3}\mathcal{R}\cap \Theta^c : = &\{ \mathcal{E} \cap \Theta^c \,|\,\mathcal{E}\in\mathcal{R}\}\\
 \mathcal{R}\cup \Theta : = &\{ \mathcal{E} \cup \Theta \,|\,\mathcal{E}\in\mathcal{R}\}\,.
\end{align}

\subsection{CSG models, a special class}

Crucially, it was also proved that in every Classical Sequential Growth (CSG) model, the most-studied class of sequential growth dynamics  \cite{Rideout:1999ub},  the set of rogues $\Theta$ has measure zero  \cite{Brightwell:2002yu, Brightwell:2002vw}.  Combined with the measure independent 
theorem stated above, this means that in every CSG model, for every covariant event there is a stem event such that their difference is of measure zero. 
Indeed, the lemma proved above says that removing the rogues from a covariant event turns it into a stem event and 
in dynamics satisfying $\mu(\Theta)=0$, such as CSG models, 
$\mu(\mathcal{E}) = \mu(\mathcal{E}\cap\Theta^c)$ for any covariant event $\mathcal{E}$. 
So, for CSG models, a physically motivated class of sequential growth dynamics, the stem observables exhaust the physical observables in this well-defined sense. 

Finally we can summarise the relations between the $\sigma$-algebras mentioned so far: 
\begin{align} \mathcal{R}\cap \Theta^c \subset \mathcal{R}_{\Theta}\subset \mathcal{R}(\mathcal{S})\subset\mathcal{R}\,.\end{align}  $\mathcal{R}(\mathcal{S})$ is, in this sense, an ``over-complete'' set of observables in dynamics in which the set of rogues has measure zero.  It is the simple physical interpretation of $\mathcal{R}(\mathcal{S})$ that makes it the most meaningful choice of physical event algebra, or its elements the most meaningful choice of physical observables. 

The significance of the results about $\mathcal{R}(\mathcal{S})$  is two-fold. First, as mentioned above, we can now assign a clear meaning to every dynamically relevant observable in dynamics satisfying $\mu(\Theta)=0$: it is a logical combination of statements about which (unlabeled) stems are contained in the causet spacetime. For example, the statement ``the causal set universe has a unique minimal element'' corresponds to a stem event because it is equivalent to the statement ``the causal set universe does not contain the 2-antichain as a stem''. This event is particularly interesting for causal set cosmology, as it can be interpreted as a Big Bang event. Second, this result has led to the construction of \textit{covtree}---a tree on which every random walk corresponds directly to a measure on $\mathcal{R}(\mathcal{S})$ (and vice versa)---a new tool for studying causal set dynamics \cite{Dowker:2019qiz,Zalel:2020oyf,Bento:2021voo}. More generally, a
better understanding of the structure and sub-algebras of $\mathcal{R}$ can enable us to develop new methods through which to define a measure $\mu$ and thus to establish new avenues for seeking physically motivated dynamics.

\section{Cyclic models and observables from principal cylinder sets}\label{chap_6_o_subsec}
 
 \subsection{Cyclic models}
 Our theory space is that of the sequential growth models of section \ref{sec:sequential} of which
CSG models are a special case. Our theorems make no use of the particular properties of CSG models but we have them in mind as a physically motivated class of models to which we can apply the theorems. 

Let us call a causet with infinitely many breaks, a \textit{cyclic} causet and let $\mathcal{B}_{\infty}$ denote the event that the causet is cyclic, \textit{i.e.},
\begin{equation}
\mathcal{B}_{\infty}:=\{ \lc{C}\in\lc{\Omega}|\lc{C} \text{ contains infinitely many breaks}  \}.\end{equation}

\begin{lemma}\label{lemma030523}
$\mathcal{B}_\infty \in \mathcal{R}(\mathcal{S})$.
\end{lemma}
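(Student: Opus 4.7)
The plan is to show that $\mathcal{B}_\infty$ is a countable Boolean combination of stem-sets, exploiting the labeling convention on $\lc{\Omega}$ to connect breaks to stem types.

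First I would establish a labeling lemma: if $(\lc{A},\lc{B})$ is a break of some $\lc{C}\in\lc{\Omega}$ with $|\lc{A}|=n$ finite, then $\lc{A}=\{0,1,\dots,n-1\}$. This is because for any $b\in\lc{B}$ and any $a\in\lc{A}$, the relation $a\prec b$ together with $x\prec y\Rightarrow x<y$ forces $a<b$, so $\lc{A}\subseteq\{0,\dots,b-1\}$; applying this to $b=\min \lc{B}$ and using $\lc{A}\sqcup\lc{B}=\mathbb{N}$ pins down $\lc{A}$ as the initial segment. A symmetric argument also shows $\lc{B}$ cannot be finite and $\lc{A}$ cannot be infinite. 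As a consequence, there is at most one break of $\lc{C}$ for each past-size $n\geq 1$, so writing $Br_n:=\{\lc{C}\in\lc{\Omega}:\lc{C}\text{ has a break with past of size }n\}$, we get
\begin{equation}
\mathcal{B}_\infty=\bigcap_{N=1}^{\infty}\bigcup_{n\geq N}Br_n.
\end{equation}
It therefore suffices to prove $Br_n\in\mathcal{R}(\mathcal{S})$ for each $n\geq 1$.

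Next I would characterize $Br_n$ purely in terms of unlabeled stem types. Say that a finite order $C$ has property $P_n$ if $|C|\geq n$ and $C$ has a size-$n$ down-set below which every other element of $C$ lies. The key claim is
\begin{equation}
Br_n=\{\lc{C}\in\lc{\Omega}:\text{every stem of }\lc{C}\text{ of size}\geq n\text{ has order-type with }P_n\}.
\end{equation}
The forward direction follows because any stem $S$ of $\lc{C}$ with $|S|\geq n$ either is contained in the break's past $\lc{A}$ (forcing $S=\lc{A}$) or hits $\lc{B}$ (forcing $S\supseteq\lc{A}$ since $\lc{A}$ lies below every element of $\lc{B}$); in both cases $S\setminus\lc{A}\subseteq\lc{B}$ is above $\lc{A}$, witnessing $P_n$. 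The reverse direction uses the labeling lemma applied internally to each $S_m:=\lc{C}|_{[0,m-1]}$ for $m\geq n$: if the order-type of $S_m$ has $P_n$, then the witnessing size-$n$ down-set in the labeled causet $S_m$ must be $\{0,\dots,n-1\}$, so every element of $\{n,\dots,m-1\}$ is above every element of $\{0,\dots,n-1\}$ in $\lc{C}$; letting $m\to\infty$ exhibits the break at size $n$.

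To close, let $\mathcal{U}_n$ be the countable collection of unlabeled orders of size $\geq n$ failing $P_n$. Then $\lc{C}\notin Br_n$ iff $\lc{C}$ contains a stem whose order-type is in $\mathcal{U}_n$, i.e.
\begin{equation}
Br_n^c=\bigcup_{C\in\mathcal{U}_n}stem(C)\in\mathcal{R}(\mathcal{S}),
\end{equation}
a countable union of stem-sets, so $Br_n\in\mathcal{R}(\mathcal{S})$ and the $\sigma$-algebra closure properties give $\mathcal{B}_\infty\in\mathcal{R}(\mathcal{S})$. The main obstacle is the reverse direction of the stem characterization, where an unlabeled property of order-types has to be converted into a statement about specific natural-number labels in $\lc{C}$; this conversion is where the initial-segment labeling lemma does the real work.
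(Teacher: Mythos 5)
Your proof is correct, and the top-level decomposition is the same as the paper's up to complementation: the paper writes $\mathcal{B}_\infty^c=\bigcup_{k}\bigcap_{n>k}\Gamma_n^c$ where $\Gamma_n$ is exactly your $Br_n$. Where you differ is in the key sub-lemma showing that ``the causet has a break with past of cardinality $n$'' is a stem event. The paper does not prove this from scratch; it invokes a result from an earlier work stating that the event ``there is a break with past isomorphic to the $n$-order $A_n$'' equals $stem(\hat{A_n})\cap\bigcap_{X_{n+1}\neq\hat{A_n}}stem(X_{n+1})^c$, i.e.\ the covering order $\hat{A_n}$ is the \emph{unique} $(n+1)$-stem, and then takes the union over $n$-orders. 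You instead prove a self-contained characterisation via the property $P_n$ of \emph{all} stems of cardinality at least $n$, which expresses $Br_n^c$ directly as a countable union of stem-sets; the real work is your initial-segment labeling lemma, which is also implicitly what underlies the cited identity. Your route is more elementary and self-contained (no external citation, and $Br_n^c$ needs only unions of stem-sets rather than intersections of complements), while the paper's is more economical in that only the $(n+1)$-stems need to be inspected and the heavy lifting is outsourced. Both arguments are sound; yours could be tightened slightly by noting explicitly that $P_n$ is an isomorphism invariant (so that ``the order-type has $P_n$'' is well posed) and that finite down-sets of $\lc{C}|_{[0,m-1]}$ coincide with finite down-sets of $\lc{C}$ contained in $[0,m-1]$, which is what lets you pass between stems of restrictions and membership in $stem(C)$, but these are routine.
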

\begin{proof} 
Let event $\Gamma_n(A_n)$ be the event ``the causet has a break with past $A_n$'' where $A_n$ is an $n$-order. In  \cite{Zalel:2020oyf} it was proved that the event $\Gamma_n(A_n)$ is a stem event: 
\begin{align}
 \Gamma_n(A_n) = stem(\hat{A_n}) \bigcap_{X_{n+1} \ne \hat{A_n}} stem(X_{n+1})^c\,,
 \end{align}
 where $\hat{A_n}$ is the covering order of $A_n$, the $(n+1)$-order formed by adding a single element that is above every element of $A_n$,  and where the intersection is over all $(n+1)$-orders not equal to $\hat{A_n}$.  Now let event $\Gamma_n$ be the event ``the causet has a break the cardinality of whose past is $n$'' ($n >0$ by definition of break). 
\begin{align}
 \Gamma_n =  \bigcup _{A_{n} } \Gamma_n(A_n) \,,
 \end{align} 
where the union is over all $n$-orders. 

 $\mathcal{B}_\infty^c$ is the event that the causet has finitely many breaks. A causet has finitely many breaks if there exists a $k \in \mathbb{N}$ such that it has no break whose past has cardinality greater than $k$. 
 Let $\Delta_k$ be the event that the causet has no break whose past has cardinality greater than $k$. 
 \begin{align}
 \Delta_k =  \bigcap _{ n=k+1}^\infty \Gamma_n^c \,. 
 \end{align} 
 And then
 \begin{align}
\mathcal{B}_\infty^c = \bigcup_{k = 0}^\infty \Delta_k \,. 
 \end{align} 
Hence, $\mathcal{B}_\infty$ and its complement are elements of $\mathcal{R}(\mathcal{S})$. 
\end{proof}

Since $\mathcal{R}(\mathcal{S})\subset\mathcal{R}$, every event in  $\mathcal{R}(\mathcal{S})$ is measureable in any sequential growth model. Therefore, it is a corollary of lemma \ref{lemma030523} that $\mathcal{B}_\infty$ is measureable in any sequential growth model. This result enables us to define a cyclic model as follows: a cyclic model is a sequential growth model in which $\mu(\mathcal{B}_{\infty})=1$.

A cyclic model may or may not be a CSG model and a CSG model may or may not be cyclic, though the best-understood growth dynamics---Transitive Percolation---is both, since it is a CSG model  in which the causet spacetime almost surely has infinitely many posts \cite{Bombelli:2008kr,Alon:1994,Ash:2002un,Ash:2005za,Brightwell:2016}.

Our motivation for studying cyclic models is the key role that they play in the the causal set comological paradigm that  aims to explain the emergence of a flat, homogeneous and isotropic cosmos directly from the quantum gravity era \cite{Sorkin:1998hi,Martin:2000js,Dowker:2017zqj}. Our goal in this paper, is to classify the observables in these models.

Now, in analogy to definition \ref{roguealgebra}, we define $\mathcal{R}_b$ to be the $\sigma$-algebra of all covariant events that either contain all or none of the causets which are \textit{not} cyclic:
 \begin{equation}\label{18075}\begin{split} \mathcal{R}_b:=&\{\mathcal{E} \in\mathcal{R}|\mathcal{B}_{\infty}^c\subseteq \mathcal{E}\text{ or }\mathcal{B}_{\infty}^c \subseteq \mathcal{E}^c\}\\
 = & \left(\mathcal{R}\cap \mathcal{B}_\infty \right)
 \sqcup \left(\mathcal{R}\cup \mathcal{B}_\infty^c\right),
 \end{split}\end{equation}
 where $\mathcal{B}_{\infty}^c$ denotes the complement of $\mathcal{B}_{\infty}$, \textit{i.e.} the set of causets which are \textit{not} cyclic. 
It follows from our definition of cyclic models that $\mu(\mathcal{B}_{\infty}^c)=0$ in any cyclic model, and therefore $\mathcal{R}_b$ exhausts the covariant events in a cyclic model in the same way that $\mathcal{R}_{\Theta}$ does in dynamics which satisfy $\mu(\Theta)=0$.
Note,  however that the events in $\mathcal{R}_b$ are not necessarily stem events and so do not in general have a physical interpretation. In section \ref{sec2} we will prove a theorem in exactly analogous form to the result of \cite{Brightwell:2002vw} and identify the covariant, physically interpretable observables in a cyclic dynamics.

\subsection{$\mathcal{R}_b$ is generated by principal cylinder sets}

In this subsection, we prove that $\mathcal{R}_b$ contains exactly the covariant events in the sigma algebra that is generated by a strict subset of the set of all cylinder sets.
 
The following terminology will be useful. We call $\lc{C}_n$ a \textit{principal causet} if (i) it is a causet on ground-set $[0,n-1]$ satisfying $x\prec y\implies x<y$, and (ii) it has a unique maximal element. If a principal causet $\lc{C}_n$ is a subcauset of some $\lc{C}\in\lc{\Omega}$ then $\lc{C}_n$ is a stem in $\lc{C}$ and we say that $\lc{C}_n$ is a \textit{principal stem} in $\lc{C}$. We call  $cyl(\lc{C}_n)$ a \textit{principal cylinder set} if $\lc{C}_n$ is a principal causet. Let $\lc{\mathcal{R}}_b$ denote the $\sigma$-algebra generated by the set of all principal cylinder sets.

As we will prove below, the notion of principal causets and principal cylinder sets is closely related to the notion of observables in cyclic dynamics. At this stage, we can already develop an intuition for why this is the case. Suppose the left hand diagram of Fig.\ref{postbreak} represents a break $(\lc{A},\lc{B})$ in a cyclic causet $\lc{C}$, and let $n:=|\lc{A}|$. Then element $n$ is a minimal element in $\lc{B}$ and the causet $\lc{A}\cup\{n\}$ is a principal causet and a principal stem in $\lc{C}$. One can also see that every stem of cardinality $n+1$ in $\lc{C}$ is isomorphic to $\lc{A}\cup\{n\}$. So, every cyclic causet $\lc{C}$ contains countably many principal stems — one for each break — and every stem in $\lc{C}$ is contained within some \textit{principal} stem in $\lc{C}$. The intuition is that all the physical information about $\lc{C}$ is encoded in the principal stems it contains. In the rest of this section, we make this intuition mathematically precise.

We can now state our theorem:
\begin{theorem}\label{theorem_sec_6_1}$\mathcal{R}_b=\lc{\mathcal{R}}_b\cap{\mathcal{R}}$\,,\end{theorem}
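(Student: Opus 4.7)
My plan is to prove the two inclusions $\mathcal{R}_b \subseteq \lc{\mathcal{R}}_b \cap \mathcal{R}$ and $\lc{\mathcal{R}}_b \cap \mathcal{R} \subseteq \mathcal{R}_b$ separately, with both resting on the intermediate lemma that $\mathcal{B}_{\infty}\in\lc{\mathcal{R}}_b$. To prove this intermediate result I would try to express the event that a given index $k$ is a break in $\lc{C}$ using only principal-cylinder conditions: a break at $k$ forces $\lc{C}|_{[0,k]}$ to be a principal causet and forces every label $j>k$ appearing in any later restriction of $\lc{C}$ to lie above each of $0,\ldots,k-1$, a condition that should be captured by demanding compatibility with every principal extension of $\lc{C}|_{[0,k]}$ that appears in $\lc{C}$. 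Taking the $\limsup$ over $k$ then yields $\mathcal{B}_{\infty}$. I expect this step to be the main obstacle, since a break constrains infinitely many labels whereas each principal cylinder speaks only of a finite initial segment.

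Granted $\mathcal{B}_{\infty}\in\lc{\mathcal{R}}_b$, the forward inclusion $\mathcal{R}_b\subseteq\lc{\mathcal{R}}_b\cap\mathcal{R}$ reduces to placing $\lc{\mathcal{R}}\cap\mathcal{B}_{\infty}$ inside $\lc{\mathcal{R}}_b$, which in turn follows from putting each $cyl(\lc{C}_n)\cap\mathcal{B}_{\infty}$ inside $\lc{\mathcal{R}}_b$. For this, observe that every $\lc{C}\in cyl(\lc{C}_n)\cap\mathcal{B}_{\infty}$ has arbitrarily large break indices $k\ge n$, at each of which $\lc{C}|_{[0,k]}$ is a principal causet extending $\lc{C}_n$, so
\[ cyl(\lc{C}_n)\cap\mathcal{B}_{\infty} = \Bigl(\bigcup_{k\ge n-1}\bigcup_{\lc{P}_{k+1}\text{ principal, } \lc{P}_{k+1}|_{[0,n-1]}=\lc{C}_n}cyl(\lc{P}_{k+1})\Bigr)\cap\mathcal{B}_{\infty}\in\lc{\mathcal{R}}_b. \]
Applying the decomposition in \eqref{18075} then writes any $\mathcal{E}\in\mathcal{R}_b$ either as $\mathcal{E}\cap\mathcal{B}_{\infty}$ or as $(\mathcal{E}\cap\mathcal{B}_{\infty})\cup\mathcal{B}_{\infty}^c$, both of which lie in $\lc{\mathcal{R}}_b$.

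For the reverse inclusion I would argue by contradiction: suppose some $\mathcal{E}\in\lc{\mathcal{R}}_b\cap\mathcal{R}$ contains a non-cyclic $\lc{C}$ but excludes a non-cyclic $\lc{D}$. The plan is to exploit the labeling freedom for non-cyclic causets -- each having only finitely many breaks, hence only finitely many forced principal initial segments -- to produce relabelings $\lc{C}'\cong\lc{C}$ and $\lc{D}'\cong\lc{D}$ that lie in exactly the same principal cylinders. Membership of $\mathcal{E}$ in $\lc{\mathcal{R}}_b$ would then force $\lc{C}'$ and $\lc{D}'$ to be simultaneously in $\mathcal{E}$ or simultaneously in $\mathcal{E}^c$; covariance gives $\lc{C}'\in\mathcal{E}\iff\lc{C}\in\mathcal{E}$ and $\lc{D}'\in\mathcal{E}\iff\lc{D}\in\mathcal{E}$, contradicting the assumption. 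The delicate part of this step is making the relabeling construction precise enough to work for arbitrary non-cyclic pairs; when $\lc{C}$ and $\lc{D}$ have very different break structures one may need to chain the comparison through a sequence of intermediate non-cyclic causets whose principal-cylinder signatures can successively be brought into coincidence.
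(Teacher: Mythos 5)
Your overall architecture matches the paper's (two inclusions, both resting on $\mathcal{B}_{\infty}\in\lc{\mathcal{R}}_b$ for the forward direction and a relabeling-plus-covariance argument for the reverse), and your forward inclusion is fine as far as it goes: the identity $cyl(\lc{C}_n)\cap\mathcal{B}_{\infty}=\bigl(\bigcup cyl(\lc{P}_{k+1})\bigr)\cap\mathcal{B}_{\infty}$ is correct and gives a slightly more direct route than the paper, which passes through stem-sets. But the step you flag as ``the main obstacle'' is genuinely broken as sketched. The event ``there is a break with past $[0,k-1]$'' is \emph{not} in $\lc{\mathcal{R}}_b$, so you cannot form $\mathcal{B}_{\infty}$ as a $\limsup$ of such events inside $\lc{\mathcal{R}}_b$. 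Concretely, take $k=1$: let $\lc{C}$ have $0$ below every other element with $1,2,3,\dots$ mutually unrelated, and let $\lc{D}$ have $0\prec 1$ with $2,3,\dots$ unrelated to everything. Both lie in exactly the same principal cylinders (the one-point causet and the two-chain, and no others), so no set in $\lc{\mathcal{R}}_b$ separates them; yet $\lc{C}$ has a break with past $\{0\}$ and $\lc{D}$ does not. The condition ``every label $j>k$ lies above all of $[0,k-1]$'' carries non-principal information at every finite stage, and ``compatibility with every principal extension that appears in $\lc{C}$'' is not a set operation on principal cylinders. The paper's way out is to encode ``the first $l$ segments are exactly $\lc{C}_{n_{i_1}},\dots,\lc{C}_{n_{i_l}}$'' as an infinite intersection over $k$ of unions of principal cylinders of $k$-segment strings, so that the exact positions of the first $l$ breaks are pinned down only in the limit $k\to\infty$; $\mathcal{B}_{\infty}$ is then $\bigcap_l \mathcal{B}_l$.

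The reverse inclusion has the right shape but the chaining you defer is the entire content of the step, and two specific ingredients are missing. First, you need the relabeling lemma: every non-cyclic causet is isomorphic to one with only finitely many principal initial segments. This requires an explicit construction, because a causet with no breaks at all can still have $\lc{C}|_{[0,n]}$ principal for infinitely many $n$ under an unlucky labeling, and your phrase ``finitely many forced principal initial segments'' elides the work of actually killing the unforced ones. Second, distinct non-cyclic causets generically land in \emph{different} atoms $\Gamma_{\lc{C}_n}$ of $\lc{\mathcal{R}}_b$, and you cannot in general relabel two arbitrary non-cyclic causets into the same atom (their sets of finite stems already differ). The linking device is the family of causets $\lc{X}_{\lc{C}_n}$ obtained by placing an infinite antichain above a principal causet $\lc{C}_n$: $\lc{X}_{\lc{C}_n}$ lies in the atom $\Gamma_{\lc{C}_n}$, while a relabeling $\lc{X}'_{\lc{C}_n}$ with element $0$ unrelated to everything lies in the atom of the one-point causet. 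Covariance plus atomicity then propagates membership of $\mathcal{E}$ from any one atom through $\Gamma_{\lc{C}_1}$ to every $\Gamma_{\lc{C}_n}$, hence to all causets with finitely many principal initial segments, hence (by the relabeling lemma and covariance) to all of $\mathcal{B}_{\infty}^c$. Without these two constructions the contradiction you aim for does not close.
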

\noindent and we spend the rest of this subsection proving it.

We will use the following terminology when discussing breaks. Given a break  $(\lc{A},\lc{B})$, $\lc{A}$ and $\lc{B}$ are called the \textit{past} and the \textit{future} of the break, respectively. If a causet $\lc{C}$ contains more than one break we order its breaks by the cardinality of their pasts: $(\lc{A}_1,\lc{B}_1),(\lc{A}_2,\lc{B}_2),...$ where $|\lc{A}_1|<|\lc{A}_2|<...$. 
The pasts of the breaks are a nested increasing sequence of stems (finite down sets): $\lc{A}_1 \subset \lc{A}_2
\subset \lc{A}_3 \dots$. The futures of the breaks are a nested decreasing sequence of infinite up sets:
 $\lc{B}_1 \supset \lc{B}_2
\subset \lc{B}_3 \dots$.

A \textit{segment} (or \textit{epoch}) is a subcauset that  lies between two consecutive breaks (\textit{i.e.} the segments of $\lc{C}$ are $A_1, A_2\setminus A_1, A_3\setminus A_2,...$). If $\lc{C}$ contains exactly $k$ breaks where $0<k<\infty$ then $B_k$ is also a segment, the infinite final epoch. If $\lc{C}$ contains no breaks then $\lc{C}$ itself is the only segment in $\lc{C}$. 

Let $\lc{\mathcal{F}}\subset\lc{\Omega}$ denote the set of all infinite causets that  contain finitely many principal causets as stems (\textit{i.e.} $\lc{C}\in\lc{\mathcal{F}} \iff \exists \  m\in\mathbb{N}$ such that, for all $n>m$, $\lc{C}|_{[0,n]}$ contains at least two maximal elements).

\begin{lemma}\label{claim100} $\lc{C}\in \mathcal{B}_{\infty}^c$ if and only if there exists some $\lc{D}\in\lc{\mathcal{F}}$ such that $\lc{D}\cong \lc{C}$. \end{lemma}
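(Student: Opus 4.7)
The bridge between breaks and principal stems comes from the labelling convention: in any natural labelling $\lc{D}$, if $(\lc{A},\lc{B})$ is a break then every element of $\lc{A}$ carries a strictly smaller label than every element of $\lc{B}$, forcing $\lc{A}$ to occupy exactly the labels $[0,|\lc{A}|-1]$. The element labelled $|\lc{A}|$ then lies in $\lc{B}$, is above every element of $\lc{A}$, and is the unique maximum of $\lc{D}|_{[0,|\lc{A}|]}$, so $\lc{D}|_{[0,|\lc{A}|]}$ is a principal causet. I plan to use this observation as the backbone of both directions.

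For the ``if'' direction, given $\lc{D}\cong\lc{C}$ with $\lc{D}\in\lc{\mathcal{F}}$, the observation associates to each break of $\lc{D}$ a principal stem of distinct cardinality, so $\lc{D}$ can have only finitely many breaks. Since the number of breaks is covariant, $\lc{C}\in\mathcal{B}_\infty^c$.

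For the ``only if'' direction, suppose $\lc{C}$ has $k<\infty$ breaks with final infinite epoch $\lc{E}$; set $\lc{A}_k:=\emptyset$ and $\lc{E}:=\lc{C}$ if $k=0$. The past segments are forced to carry labels $[0,|\lc{A}_k|-1]$, so it remains to relabel $\lc{E}$ by $\{|\lc{A}_k|,|\lc{A}_k|+1,\dots\}$ in such a way that, for all sufficiently large $n$, the stem $\lc{D}|_{[0,n]}$ has $\geq 2$ maximal elements. I would build this labelling inductively using three facts: (i) $\lc{E}$ has no breaks, as any break of $\lc{E}$ combined with $\lc{A}_k$ would yield a $(k{+}1)$th break of $\lc{C}$; (ii) $\lc{E}$ has at least two minimal elements, since otherwise past-finiteness would place a unique minimum below everything in $\lc{E}$ and, together with $\lc{A}_k$, furnish a forbidden break; (iii) for any nonempty finite stem $\lc{S}\subset\lc{E}$, since $(\lc{S},\lc{E}\setminus\lc{S})$ is not a break there exist $y\in\lc{S}$ and $w\in\lc{E}\setminus\lc{S}$ with $y\not\prec w$, and promoting $y$ to a maximal element $y^*$ of $\lc{S}$ above it and replacing $w$ by a minimal element $z$ of $\lc{E}\setminus\lc{S}$ below it preserves $y^*\not\prec z$. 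Using (ii) for the first two labels and (iii) at each subsequent step produces stems with $\geq 2$ maxima from label $|\lc{A}_k|+1$ onward.

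The main obstacle is bookkeeping to ensure the labelling is a bijection onto $\{|\lc{A}_k|,|\lc{A}_k|+1,\dots\}$: a careless greedy choice of the ``good'' minimal element $z$ at each step could starve some element of $\lc{E}$ forever. I would resolve this by dovetailing: fix any enumeration $e_1,e_2,\dots$ of $\lc{E}$ and, at each step, select the smallest-indexed $e_j$ that is both minimal in the unlabelled part of $\lc{E}$ and good. The key fact that makes this work is that whenever an unlabelled $e_j$ is minimal in $\lc{E}\setminus\lc{S}$, any other minimal $z\ne e_j$ of $\lc{E}\setminus\lc{S}$ must be incomparable to $e_j$ (since $\lc{S}$ is a down-set and both $e_j,z$ are minimal outside); adding such a $z$ installs a new maximum incomparable to $e_j$, rendering $e_j$ good on the very next step, and induction on the index $j$ then gives the bijection.
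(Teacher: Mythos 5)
Your argument is correct, and while the ``if'' direction coincides with the paper's (a break with past $\lc{A}$ forces the restriction to the first $|\lc{A}|+1$ labels to be principal, so infinitely many breaks would force infinitely many principal stems), your converse takes a genuinely different route to the same goal of exhibiting an order-isomorphic relabelling in $\lc{\mathcal{F}}$. The paper keeps the given labelling of $\lc{C}$ and performs local surgery: at each position beyond the last break where the restriction is principal, it locates an element that fails to dominate the initial segment below it --- which exists precisely because there is no break there --- and shifts it down into that position, leaving the rest of the labelling alone. You instead rebuild the labelling of the final epoch from scratch, at each step inserting a minimal element of the unlabelled part chosen so that the labelled down-set retains at least two maximal elements; your fact (iii) is the same ``no break $\Rightarrow$ a non-dominating element exists'' mechanism, repackaged as a statement about arbitrary finite down-sets of the break-free final epoch. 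What your version buys is a transparent surjectivity argument: the fairness rule, together with the observation that any unlabelled element that is minimal in the unlabelled part becomes (and remains) good one step after any other element is inserted, guarantees that every element is eventually labelled --- a point the paper's swap-based construction leaves largely implicit.

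One small repair: ``fix any enumeration'' is too loose for the concluding ``induction on the index $j$'', because $e_j$ may have ancestors of larger index, in which case the induction hypothesis does not ensure that $e_j$ ever becomes minimal in the unlabelled part. Either take the enumeration to be a linear extension of $\lc{E}$ (for instance the labelling it inherits from $\lc{C}$), so that all ancestors of $e_j$ have smaller index, or argue directly with a minimal element of the set of never-labelled elements, which exists by past-finiteness and all of whose strict ancestors are, by minimality, eventually labelled. With that adjustment the bijectivity argument closes and the proof is complete.
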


\begin{proof}
Note that $\lc{\mathcal{F}}\subset\mathcal{B}_{\infty}^c$, since a causet with infinitely many breaks necessarily contains infinitely many principal causets as stems. Therefore, if there exists some $\lc{D}\in\lc{\mathcal{F}}$ such that $\lc{D}\cong \lc{C}$ then $\lc{C}\in \mathcal{B}_{\infty}^c$.

To prove the converse, let $\lc{C}\in \mathcal{B}_{\infty}^c$ and let $N$ be the cardinality of the past of the last break in $\lc{C}$. Suppose $\lc{C}$ contains infinitely many principal causets as stems and consider the infinite sequence, $$n_1<n_2<...\equiv (n_i)$$ of integers $n_i>N+1$ for which the restriction $\lc{C}|_{[0,n_i]}$ is a principal causet (\textit{i.e.} $n_i\succ x \ \forall \ x<n_i$).

We now construct a bijection $g:\lc{C}\rightarrow \mathbb{N}$ and define $\lc{D}\in\lc{\Omega}$ to be the infinite causet in which $x\prec y \iff g^{-1}(x)\prec g^{-1}(y)$ in $\lc{C}$. The definition of $\lc{D}$ ensures that $\lc{D}\cong\lc{C}$  and our construction of $g$ ensures that $\lc{D}\in\lc{\mathcal{F}}$. This proves the claim.

Set $g(x)=x \ \forall \ x<n_1$. Set $g(n_1)=n_1+1$ (this ensures that $\lc{D}|_{[0,n_1]}$ will not be a principal causet). Let  $n_1^*$ denote the smallest integer greater than $n_1$ that  satisfies both $n_1^*\not\succ n_1$ and $n_1^*\not\succ y$ for at least one $y<n_1-1$ in $\lc{C}$. (Such an integer surely exists because otherwise $\lc{C}$ would contain a break with past $[0,n_1-1]$, which is a contradiction.) Set $g(x)=x+1 \ \forall \ n_1<x<n_1^*$ and $g(n_1^*)=n_1$.

Now, define $m_2$ to be equal to the smallest entry in $(n_i)$ that  is larger than $n_1^*$. Set $g(x)=x \ \forall \  n_1^*<x<m_2$. Set $g(m_2)=m_2+1$ (this ensures that $\lc{D}|_{[0,m_2]}$ will not be a principal causet). Let  $n_2^*$ denote the smallest integer greater than $m_2$ that  satisfies both $n_2^*\natural m_2$ and $m_2^*\not\succ y$ for at least one $y<m_2-1$ in $\lc{C}$. Set $g(x)=x+1 \ \forall \ m_2<x<n_2^*$ and $g(n_2^*)=m_2$. Repeat this process with $m_2\rightarrow m_3,n_2^*\rightarrow n_3^*$ \textit{etc.}
\end{proof}

\begin{figure}[htpb]
  \centering
	\includegraphics[width=0.48\textwidth]{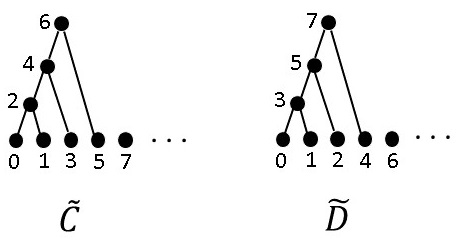}
	\caption[Labeled causets without breaks]{Illustration of lemma \ref{claim100}. $\lc{C}\in\mathcal{B}_{\infty}^c$, since it contains no breaks. $\lc{C}\not\in\lc{\mathcal{F}}$, since $\lc{C}|_{[0,2n]}$ is a principal causet $\forall \ n\in\mathbb{N}$. $\lc{D}\in\lc{\mathcal{F}}$, since $\lc{D}|_{[0,n]}$ is \textit{not} a principal causet $\forall \ n>0$.}
	\label{relabel_non_break} 
\end{figure}

\begin{corollary}\label{cor3} If $\mathcal{E}\in\mathcal{R}$ and $\lc{\mathcal{F}}\subset \mathcal{E}$ then $\mathcal{B}_{\infty}^c\subseteq \mathcal{E}$.
\end{corollary}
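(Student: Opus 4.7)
The plan is to derive this directly from Lemma \ref{claim100} combined with the covariance property of $\mathcal{E}$. The key observation is that $\mathcal{R}$ by definition consists of events closed under order-isomorphism, and Lemma \ref{claim100} gives a representative in $\lc{\mathcal{F}}$ for every order-isomorphism class in $\mathcal{B}_\infty^c$. So the hypothesis $\lc{\mathcal{F}} \subset \mathcal{E}$ already covers at least one labelled representative of each isomorphism class in $\mathcal{B}_\infty^c$, and covariance does the rest.

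Concretely, I would proceed as follows. Pick an arbitrary $\lc{C} \in \mathcal{B}_\infty^c$. By Lemma \ref{claim100}, there exists some $\lc{D} \in \lc{\mathcal{F}}$ with $\lc{D} \cong \lc{C}$. By the hypothesis $\lc{\mathcal{F}} \subset \mathcal{E}$, we have $\lc{D} \in \mathcal{E}$. Since $\mathcal{E} \in \mathcal{R}$ is a covariant event, the defining closure condition $\lc{D} \in \mathcal{E}$ and $\lc{D} \cong \lc{C}$ together imply $\lc{C} \in \mathcal{E}$. Since $\lc{C}$ was arbitrary, $\mathcal{B}_\infty^c \subseteq \mathcal{E}$.

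There is no real obstacle here: the entire content of the corollary has been packaged into Lemma \ref{claim100}, whose proof constructed the required isomorphic representative in $\lc{\mathcal{F}}$ by an explicit relabelling argument. The corollary is essentially a one-line consequence, and the only thing to be careful about is correctly invoking the two ingredients (existence of the $\lc{\mathcal{F}}$-representative, and order-isomorphism closure of covariant events) in the right order.
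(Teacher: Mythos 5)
Your proof is correct and follows exactly the route the paper intends: the corollary is stated without proof precisely because it is the immediate consequence of Lemma \ref{claim100} (existence of an $\lc{\mathcal{F}}$-representative for each isomorphism class in $\mathcal{B}_{\infty}^c$) combined with the covariance of $\mathcal{E}$, which is what you wrote.
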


\begin{lemma}\label{claim11} Let $\mathcal{E} \in\lc{\mathcal{R}}_b\cap{\mathcal{R}}$. $\mathcal{E}\cap\mathcal{B}_{\infty}^c\not=\emptyset \implies \mathcal{B}_{\infty}^c\subseteq \mathcal{E}.$  \end{lemma}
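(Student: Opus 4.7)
The plan is to reduce the statement, via Corollary \ref{cor3}, to the claim $\lc{\mathcal{F}}\subseteq\mathcal{E}$. Starting from any $\lc{C}\in\mathcal{E}\cap\mathcal{B}_\infty^c$, Lemma \ref{claim100} and covariance of $\mathcal{E}$ yield some $\lc{D}\in\mathcal{E}\cap\lc{\mathcal{F}}$; let $\lc{P}$ denote the maximum principal stem of $\lc{D}$. The first structural observation I would use is that because $\mathcal{E}\in\lc{\mathcal{R}}_b$, membership of $\lc{C}$ in $\mathcal{E}$ depends only on the collection of principal cylinder sets containing $\lc{C}$: the class of subsets of $\lc{\Omega}$ enjoying this dependence is a $\sigma$-algebra containing every principal cylinder set, hence contains all of $\lc{\mathcal{R}}_b$. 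Consequently $\mathcal{E}$ already contains every $\lc{C}\in\lc{\Omega}$ whose set of principal initial stems (call it its \emph{principal profile}) coincides with that of $\lc{D}$, which is the chain of initial principal stems of $\lc{P}$.

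The main construction is a \textbf{bridge causet}: for each finite principal causet $\lc{Q}$, define $C_{\lc{Q}}\in\lc{\Omega}$ by placing $\lc{Q}$ at positions $[0,|\lc{Q}|-1]$ and an infinite antichain of mutually incomparable elements---each also incomparable with every element of $\lc{Q}$---at positions $[|\lc{Q}|,\infty)$. I claim (a) $C_{\lc{Q}}$ has no break and (b) $C_{\lc{Q}}$ admits a labeling $\lc{D}^{*}_{\lc{Q}}$ whose principal profile is the trivial one $\{\{0\}\}$. For (a), in any purported break $(\lc{A},\lc{B})$, $\lc{A}$ cannot contain an antichain element $a$ (some other antichain element in $\lc{B}$ would be incomparable to $a$, violating $a\prec b$), and $\lc{A}\subseteq\lc{Q}$ with $\lc{A}\ne\emptyset$ is impossible either (antichain elements in $\lc{B}$ are not above any $\lc{Q}$-element). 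For (b), $\lc{D}^{*}_{\lc{Q}}$ is constructed inductively by choosing at each step an element $e_n\in C_{\lc{Q}}\setminus\{e_0,\ldots,e_{n-1}\}$ whose past is a proper subset of $\{e_0,\ldots,e_{n-1}\}$. Such an element fails to exist only when every minimal element of the remaining causet is above all of $\{e_0,\ldots,e_{n-1}\}$, which makes $\{e_0,\ldots,e_{n-1}\}$ the past of a break of $C_{\lc{Q}}$---impossible by (a).

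Assembling these: applied with $\lc{Q}=\lc{P}$, the canonical labeling of $C_{\lc{P}}$ (with $\lc{P}$ at positions $[0,|\lc{P}|-1]$) has principal profile equal to that of $\lc{D}$, so it lies in $\mathcal{E}$; by covariance every labeling of $C_{\lc{P}}$ is in $\mathcal{E}$, including $\lc{D}^{*}_{\lc{P}}$. The profile-dependence then puts every causet with trivial principal profile inside $\mathcal{E}$, and covariance promotes this to: every $\lc{C}\in\lc{\Omega}$ whose underlying order has no break lies in $\mathcal{E}$. For arbitrary finite principal $\lc{Q}$, the bridge $C_{\lc{Q}}$ has no break, so every labeling of $C_{\lc{Q}}$---in particular its canonical labeling, whose principal profile is the chain of initial principal stems of $\lc{Q}$---is in $\mathcal{E}$. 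Since every $\lc{D}'\in\lc{\mathcal{F}}$ has a unique maximum principal stem $\lc{Q}'$ and principal profile equal to the initial principal stems of $\lc{Q}'$, the union over $\lc{Q}$ of these profile-classes exhausts $\lc{\mathcal{F}}$, yielding $\lc{\mathcal{F}}\subseteq\mathcal{E}$; Corollary \ref{cor3} then gives $\mathcal{B}_\infty^c\subseteq\mathcal{E}$. The main obstacle is verifying (b)---that no-breaks in $C_{\lc{Q}}$ really does translate into the ability to avoid every principal stem beyond position $0$ during the inductive labeling.
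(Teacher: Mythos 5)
Your proposal is correct and follows essentially the same route as the paper: your bridge causet $C_{\lc{Q}}$ is the paper's $\lc{X}_{\lc{C}_n}$, your ``principal profile'' dependence is the paper's observation that the sets $\Gamma_{\lc{C}_n}$ are atoms of $\lc{\mathcal{R}}_b$, and the passage from an arbitrary element of $\mathcal{E}\cap\mathcal{B}_{\infty}^c$ down to the trivial-profile class and back up to all of $\lc{\mathcal{F}}$, finishing with Corollary \ref{cor3}, is the paper's argument. The obstacle you flag in (b) (including the unaddressed point that your greedy relabeling must actually exhaust $C_{\lc{Q}}$) dissolves if you simply assign label $0$ to one of the antichain elements of $C_{\lc{Q}}$ and label the rest in any natural order---an isolated element labelled $0$ forces every initial segment of size at least $2$ to have two maximal elements---which is exactly the paper's $\lc{X}'_{\lc{C}_n}$.
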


\begin{proof} For each principal causet $\lc{C}_n$ let $S(\lc{C}_n)$ denote the set of principal causets with cardinality greater than $n$ whose restriction to $[0,n-1]$ is $\lc{C}_n$, and define \begin{align} \Gamma_{\lc{C}_n}:=cyl(\lc{C}_n)\setminus\bigcup_{\lc{D}_m\in S(\lc{C}_n)}cyl(\lc{D}_m).\end{align}

\noindent $\Gamma_{\lc{C}_n}$ is the set of infinite causets that  (i) contain the principal causet $\lc{C}_n$ as a stem and (ii) contain no principal causet of cardinality greater than $n$ as a stem. We will use the following properties:

\begin{enumerate}
\item Each $\Gamma_{\lc{C}_n}$ is an atom of $\lc{\mathcal{R}_b}$ (\textit{i.e.} the elements of  $\Gamma_{\lc{C}_n}$ cannot be separated by the principal cylinder sets).
\item The collection of all the $\Gamma_{\lc{C}_n}$ is a partition of $\lc{\mathcal{F}}$ (since every $\lc{C}\in\lc{\mathcal{F}}$ is contained in some $\Gamma_{\lc{C}_n}$ and for any two principal causets $\lc{C}_n\not=\lc{D}_m$ we have $\Gamma_{\lc{C}_n}\cap \Gamma_{\lc{D_m}}=\emptyset$). 
\end{enumerate}

Given a principal causet $\lc{C}_n$, let $\lc{X}_{\lc{C}_n}\in\lc{\Omega}$ denote the infinite causet whose restriction to $[0,n-1]$ is $\lc{C}_n$ and in which all elements $m\geq n$ are unrelated to all others. Let $\lc{X}'_{\lc{C}_n}$ denote a causet isomorphic to $\lc{X}_{\lc{C}_n}$ in which the element 0 is unrelated to all others. Then $\lc{X}_{\lc{C}_n}\in \Gamma_{\lc{C}_n}$ and $\lc{X}'_{\lc{C}_n}\in \Gamma_{\lc{C}_1}$.

\vspace{1mm}
Suppose $\Gamma_{\lc{C}_1}\subset \mathcal{E}\in \lc{\mathcal{R}}_b\cap{\mathcal{R}}$. Since $\Gamma_{\lc{C}_1}\subset \mathcal{E}$ we have $\lc{X}'_{\lc{C}_n}\in \mathcal{E}$ for all principal causets $\lc{C}_n$. Since $\mathcal{E}$ is covariant, $\lc{X}_{\lc{C}_n}\in \mathcal{E}$ for every principal $\lc{C}_n$. Hence, by property 1, $\Gamma_{\lc{C}_n}\subset \mathcal{E}$ for every principal $\lc{C}_n$. By property 2, $\lc{\mathcal{F}}\subset \mathcal{E}$ and by corollary \ref{cor3}, $\mathcal{B}_{\infty}^c\subseteq \mathcal{E}$.

Now, consider any $\mathcal{E}\in \lc{\mathcal{R}}_b\cap{\mathcal{R}}$ for which $\mathcal{E}\cap \mathcal{B}_{\infty}^c\not=\emptyset$ and let $\lc{C}\in \mathcal{E}\cap \mathcal{B}_{\infty}^c$. Then there exists some $\lc{D}\cong\lc{C}$ and some $\lc{C}_n$ such that $\lc{D}\in \Gamma_{\lc{C}_n}$ and therefore $\Gamma_{\lc{C}_n}\subset \mathcal{E}$. Hence $\lc{X}_{\lc{C}_n}\in \Gamma_{\lc{C}_n}$. Since $\mathcal{E}$ is covariant, $\lc{X}'_{\lc{C}_n}\in \mathcal{E}$. Therefore $\Gamma_{\lc{C}_1}\subset \mathcal{E}$, which completes the proof.
\end{proof}

\begin{corollary}\label{18071}$\lc{\mathcal{R}}_b\cap{\mathcal{R}}\subseteq\mathcal{R}_b.$\end{corollary}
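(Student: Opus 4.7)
The plan is to apply Lemma \ref{claim11} via a trivial dichotomy. Take any $\mathcal{E}\in\lc{\mathcal{R}}_b\cap\mathcal{R}$. The hypothesis already gives $\mathcal{E}\in\mathcal{R}$, so to conclude $\mathcal{E}\in\mathcal{R}_b$ I only need to verify one of the two disjuncts in the definition (\ref{18075}): either $\mathcal{B}_\infty^c\subseteq\mathcal{E}$ or $\mathcal{B}_\infty^c\subseteq\mathcal{E}^c$.

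I would split cases on whether the intersection $\mathcal{E}\cap\mathcal{B}_\infty^c$ is empty. If it is, then by elementary set theory $\mathcal{B}_\infty^c\subseteq\mathcal{E}^c$, so $\mathcal{E}\in\mathcal{R}_b$. If it is not, then the hypotheses of Lemma \ref{claim11} are met -- $\mathcal{E}\in\lc{\mathcal{R}}_b\cap\mathcal{R}$ and $\mathcal{E}\cap\mathcal{B}_\infty^c\neq\emptyset$ -- and the lemma yields $\mathcal{B}_\infty^c\subseteq\mathcal{E}$, which again places $\mathcal{E}$ in $\mathcal{R}_b$. Either way, $\mathcal{E}\in\mathcal{R}_b$, and since $\mathcal{E}$ was arbitrary this gives $\lc{\mathcal{R}}_b\cap\mathcal{R}\subseteq\mathcal{R}_b$.

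There is no real obstacle here: all the genuine work -- the interplay between principal cylinder sets, the relabelling construction of Lemma \ref{claim100}, and the partition properties of the atoms $\Gamma_{\lc{C}_n}$ -- is already encapsulated in Lemma \ref{claim11}. The corollary is just a one-line packaging of that lemma together with the tautology ``$\mathcal{E}\cap\mathcal{B}_\infty^c$ either is or is not empty.'' It is worth noting that the empty-intersection branch requires no appeal to the lemma at all, so I do not need to separately invoke anything about $\mathcal{E}^c$ lying in $\lc{\mathcal{R}}_b\cap\mathcal{R}$ (though of course it does, by closure of both factors under complementation).
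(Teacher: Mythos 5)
Your proof is correct and is exactly the argument the paper intends: the corollary is stated without proof as an immediate consequence of Lemma \ref{claim11}, and your case split on whether $\mathcal{E}\cap\mathcal{B}_{\infty}^c$ is empty is the canonical way to unpack it. Nothing is missing.
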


\begin{lemma}\label{lemma_09062103}$\mathcal{B}_{\infty}\in\lc{\mathcal{R}_b}$. \end{lemma}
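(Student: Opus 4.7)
The plan is to show $\mathcal{B}_\infty \in \lc{\mathcal{R}}_b$ by exploiting the fact that every cyclic causet has infinitely many principal stems (one for each break), so that $\mathcal{B}_\infty \subseteq \lc{\mathcal{F}}^c$. Since $\lc{\mathcal{F}} = \bigsqcup_{\lc{C}_n \text{ principal}} \Gamma_{\lc{C}_n}$ is a countable disjoint union of elements of $\lc{\mathcal{R}}_b$ (by properties 1 and 2 established in the proof of Lemma \ref{claim11}), $\lc{\mathcal{F}}^c \in \lc{\mathcal{R}}_b$, and intersecting with $\lc{\mathcal{F}}^c$ will let me convert arbitrary cylinder events into principal cylinder events at higher levels.

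The first step is to show that for any (not necessarily principal) finite labelled causet $\lc{A}_k$, the intersection $cyl(\lc{A}_k) \cap \lc{\mathcal{F}}^c$ lies in $\lc{\mathcal{R}}_b$. Writing $P_m := \{\lc{C} : \lc{C}|_{[0, m-1]} \text{ is a principal causet}\} \in \lc{\mathcal{R}}_b$ (a countable union of principal cylinders), I would use the identity
\begin{align}
cyl(\lc{A}_k) \cap \lc{\mathcal{F}}^c = \bigcap_{m_0 \geq k} \bigcup_{m > m_0} \bigl(cyl(\lc{A}_k) \cap P_m\bigr),
\end{align}
together with the observation that for each $m > k$, the set $cyl(\lc{A}_k) \cap P_m$ is a countable union of principal cylinders, namely $\bigcup cyl(\lc{E}_m)$ ranging over the principal $m$-causets $\lc{E}_m$ whose restriction to $[0, k-1]$ coincides with $\lc{A}_k$.

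With this tool, I would write the event $B_n := \{\lc{C} : [0, n-1] \text{ is the past of a break in } \lc{C}\}$ as a countable intersection of cylinder events, $B_n = \bigcap_{k > n} U_n^{(k)}$, where $U_n^{(k)}$ is the cylinder event that in $\lc{C}|_{[0, k-1]}$ every element of $[n, k-1]$ lies above every element of $[0, n-1]$. Each $U_n^{(k)}$ is a countable union of cylinder sets, so by the previous step $U_n^{(k)} \cap \lc{\mathcal{F}}^c \in \lc{\mathcal{R}}_b$ and hence $B_n \cap \lc{\mathcal{F}}^c \in \lc{\mathcal{R}}_b$. The proof is then completed by noting that $\mathcal{B}_\infty = \bigcap_{k} \bigcup_{n \geq k} B_n$ together with $\mathcal{B}_\infty \subseteq \lc{\mathcal{F}}^c$:
\begin{align}
\mathcal{B}_\infty = \bigcap_{k} \bigcup_{n \geq k} \bigl(B_n \cap \lc{\mathcal{F}}^c\bigr) \in \lc{\mathcal{R}}_b.
\end{align}

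The main obstacle is the first step. A cylinder $cyl(\lc{A}_k)$ for non-principal $\lc{A}_k$ is not in general a countable Boolean combination of principal cylinders, because there exist infinite extensions of $\lc{A}_k$ whose initial segments are never principal again, for instance those whose future past stage $k$ is an antichain. Intersecting with $\lc{\mathcal{F}}^c$ forces the causet to return to a principal level infinitely often, which is precisely what permits the above identity to recover $cyl(\lc{A}_k)$ from principal cylinders at arbitrarily high levels.
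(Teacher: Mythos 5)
Your proof is correct, but it takes a genuinely different route from the paper's. The paper argues constructively via the segment decomposition of a cyclic causet: it enumerates the break-free finite causets, encodes each finite principal causet as a string of segments capped by a maximal element, and writes $\mathcal{B}_\infty=\bigcap_l\mathcal{B}_l$ where $\mathcal{B}_l$ (``at least $l$ breaks, with specified first $l$ segments'') is assembled from the corresponding principal cylinder sets. You instead prove the more general statement that $cyl(\lc{A}_k)\cap\lc{\mathcal{F}}^c\in\lc{\mathcal{R}}_b$ for \emph{every} cylinder set, via the identity $cyl(\lc{A}_k)\cap\lc{\mathcal{F}}^c=\bigcap_{m_0\geq k}\bigcup_{m>m_0}\bigl(cyl(\lc{A}_k)\cap P_m\bigr)$ with $cyl(\lc{A}_k)\cap P_m$ a finite union of principal cylinder sets; this is valid, and your identification of the obstacle (non-principal cylinders are not countable combinations of principal ones, and intersecting with $\lc{\mathcal{F}}^c$ forces infinitely many returns to a principal level) is exactly the right point. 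Combined with $\mathcal{B}_\infty\subseteq\lc{\mathcal{F}}^c$ and the standard cylinder expression $\mathcal{B}_\infty=\bigcap_k\bigcup_{n\geq k}B_n$, the conclusion follows; your preliminary appeal to the partition $\lc{\mathcal{F}}=\bigsqcup\Gamma_{\lc{C}_n}$ from the proof of lemma \ref{claim11} is harmless but not actually needed, since your $P_m$ identity already gives $\lc{\mathcal{F}}^c\in\lc{\mathcal{R}}_b$. What each approach buys: the paper's string-of-segments construction is explicit and feeds directly into the reduced-poscau picture of section \ref{subsec:reducedposc}, whereas your argument is more modular and in fact establishes the stronger fact that $\mathcal{E}\cap\lc{\mathcal{F}}^c\in\lc{\mathcal{R}}_b$ for every $\mathcal{E}\in\lc{\mathcal{R}}$ (the class of such $\mathcal{E}$ is a $\sigma$-algebra containing the cylinder sets), which would also shortcut the first half of the proof of lemma \ref{18072}: for $\mathcal{E}\in\mathcal{R}$ with $\mathcal{E}\subseteq\mathcal{B}_\infty\subseteq\lc{\mathcal{F}}^c$ one gets $\mathcal{E}=\mathcal{E}\cap\lc{\mathcal{F}}^c\in\lc{\mathcal{R}}_b$ directly, without the detour through rogues and stem-sets.
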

\begin{proof}

Consider the collection of $\lc{C}_n$ that  contain no breaks. We enumerate these causets using the label $i\in\mathbb{N}$, so that $\lc{C}_{n_{i}}$ is the $i^{th}$ causet that  contains no breaks and its cardinality is $n_{i}$. We will use the string $\lc{C}_{n_{i_1}}...\lc{C}_{n_{i_{k-2}}}{\lc{C}_{n}{_{i_{k-1}}}}$ to represent the finite principal causet which has $k$ segments and whose $j^{th}$ segment is (canonically order-isomorphic to) $\lc{C}_{n_{i_j}}$ for $1\leq j\leq k-1$.

For each causet $\lc{C}_{n_{i_1}}$ that  contains no breaks, define the set,
\begin{align}\label{0906211_eq} \mathcal{B}_1(\lc{C}_{n_{i_1}}):=\bigcap_{k=2}^{\infty} \bigcup cyl(\lc{C}_{n_{i_1}}\lc{C}_{n_{i_2}}...\lc{C}_{n_{i_{k-1}}}{\lc{C}_{{n}_{i_{k}}}}),\end{align}
where the union is over all sequences $(i_2,....,i_k)$ of natural numbers.
Note that if $\lc{C}\in\mathcal{B}_1(\lc{C}_{n_{i_1}})$ then $\lc{C}_{n_{i_1}}$ is the first segment in $\lc{C}$, and therefore $\lc{C}$ contains at least one break. Additionally, if $\lc{C}\in\mathcal{B}_{\infty}$ and  $\lc{C}_{n_{i_1}}$ is the first segment in $\lc{C}$ then $\lc{C}\in\mathcal{B}_1(\lc{C}_{n_{i_1}})$. 

We now generalise \eqref{0906211_eq} to any $l\geq 1$. Given a string $\lc{C}_{n_{i_1}}...\lc{C}_{n_{i_l}}$ of finite causets that  contain no breaks define the set
\begin{align} \mathcal{B}_l(\lc{C}_{n_{i_1}}...\lc{C}_{n_{i_l}}):=\bigcap_{k=l+1}^{\infty} \bigcup cyl(\lc{C}_{n_{i_1}}...\lc{C}_{n_{i_l}} \lc{C}_{n_{i_{l+1}}}...{\lc{C}_{n_{i_{k}}}})\end{align}
where the union is over all sequences $(i_{l+1},....,i_{k})$ of natural numbers. 
If $\lc{C}\in\mathcal{B}_l(\lc{C}_{n_{i_1}}...\lc{C}_{n_{i_l}})$ then $\lc{C}_{n_{i_1}},...,\lc{C}_{n_{i_l}}$ are the first $l$ segments in $\lc{C}$ (and therefore $\lc{C}$ contains at least $l$ breaks). Additionally, if $\lc{C}\in\mathcal{B}_{\infty}$ and  $\lc{C}_{n_{i_1}},...,\lc{C}_{n_{i_l}}$ are the first $l$ segments in $\lc{C}$ then $\lc{C}\in\mathcal{B}_l(\lc{C}_{n_{i_1}}...\lc{C}_{n_{i_l}})$.
Define,
\begin{align} \mathcal{B}_l:= \bigcup\mathcal{B}_l(\lc{C}_{n_{i_1}}\lc{C}_{n_{i_2}}...\lc{C}_{n_{i_l}}),\end{align}
where the union is over all sequences $(i_1,...,i_l)$.
For each $l$, $\mathcal{B}_l\supset\mathcal{B}_\infty$ and if $\lc{C}\in\mathcal{B}_{l}$ then $\lc{C}$ contains at least $l$ breaks.
Therefore,
\begin{align} \mathcal{B}_{\infty}= \bigcap_{l=1}^{\infty}\mathcal{B}_l.\end{align}
\end{proof}

\begin{lemma}\label{18072}$\mathcal{R}_b\subseteq\lc{\mathcal{R}}_b\cap{\mathcal{R}}.$\end{lemma}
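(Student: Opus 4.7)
The plan is to prove the reverse inclusion to Corollary \ref{18071}. Given $\mathcal{F}\in\mathcal{R}_b$, I already have $\mathcal{F}\in\mathcal{R}$ by definition, so the task is to show $\mathcal{F}\in\lc{\mathcal{R}}_b$. From \eqref{18075}, either $\mathcal{F}\subseteq\mathcal{B}_\infty$ or $\mathcal{B}_\infty^c\subseteq\mathcal{F}$, and because Lemma \ref{lemma_09062103} puts $\mathcal{B}_\infty$ (and hence $\mathcal{B}_\infty^c$) in $\lc{\mathcal{R}}_b$, the two cases are swapped by complementation inside $\lc{\mathcal{R}}_b$. So I may reduce to proving: if $\mathcal{F}\in\mathcal{R}$ and $\mathcal{F}\subseteq\mathcal{B}_\infty$, then $\mathcal{F}\in\lc{\mathcal{R}}_b$.

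The clean way to achieve this is via the stronger intermediate claim that $\mathcal{E}\cap\mathcal{B}_\infty\in\lc{\mathcal{R}}_b$ for \emph{every} $\mathcal{E}\in\lc{\mathcal{R}}$ (notice that covariance plays no role in this claim). Granting it, $\mathcal{F}=\mathcal{F}\cap\mathcal{B}_\infty\in\lc{\mathcal{R}}_b$ and the proof is complete. A standard sigma-algebra argument reduces the intermediate claim to its verification on the cylinder generators, because the family
\begin{equation*}
\mathcal{A}\;:=\;\{\mathcal{E}\in\lc{\mathcal{R}}\,:\,\mathcal{E}\cap\mathcal{B}_\infty\in\lc{\mathcal{R}}_b\}
\end{equation*}
is a sigma-algebra inside $\lc{\mathcal{R}}$ (closure under complement is where $\mathcal{B}_\infty\in\lc{\mathcal{R}}_b$ enters, to write $\mathcal{E}^c\cap\mathcal{B}_\infty=\mathcal{B}_\infty\setminus(\mathcal{E}\cap\mathcal{B}_\infty)$). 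Thus it suffices to show $cyl(\lc{D}_m)\in\mathcal{A}$ for an arbitrary, not-necessarily-principal cylinder $cyl(\lc{D}_m)$.

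The remaining step is a decomposition identity: I claim
\begin{equation*}
cyl(\lc{D}_m)\cap\mathcal{B}_\infty \;=\; \Bigl(\,\bigcup_{\lc{C}_n} cyl(\lc{C}_n)\,\Bigr)\cap\mathcal{B}_\infty\,,
\end{equation*}
where the union ranges over the countable collection of principal causets $\lc{C}_n$ with $n\geq m$ and $\lc{C}_n|_{[0,m-1]}=\lc{D}_m$. The non-trivial inclusion $\subseteq$ uses that the past $\lc{A}$ of any break in $\lc{C}\in\lc{\Omega}$ is forced to be the initial segment $\{0,\ldots,|\lc{A}|-1\}$ of $\mathbb{N}$ (since $x\prec y\Rightarrow x<y$) and the minimum element of its future is $|\lc{A}|$, which lies above all of $\lc{A}$; consequently $\lc{C}|_{[0,|\lc{A}|]}$ is a principal restriction, and a cyclic causet admits such principal restrictions at infinitely many cardinalities, in particular at some $n\geq m$. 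The right-hand side of the identity is then a countable union of principal cylinder sets (hence in $\lc{\mathcal{R}}_b$) intersected with $\mathcal{B}_\infty\in\lc{\mathcal{R}}_b$, so it lies in $\lc{\mathcal{R}}_b$, which closes the argument. The main obstacle is precisely this decomposition: one must recognise that restriction to $\mathcal{B}_\infty$ turns an arbitrary cylinder into a countable union of principal ones without picking up any extra cyclic causets, a rigidity that depends on the labelling constraint forcing break pasts to be initial segments of $\mathbb{N}$.
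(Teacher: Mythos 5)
Your proof is correct, but it takes a genuinely different route from the paper's. The paper proves the key case ($\mathcal{E}\in\mathcal{R}$, $\mathcal{E}\subseteq\mathcal{B}_\infty$) by first invoking the rogue machinery to place $\mathcal{E}$ in $\mathcal{R}(\mathcal{S})$ (using that no cyclic causet is a rogue, so $\Theta\subseteq\mathcal{E}^c$), then working in the restricted algebra $\mathcal{R}(\mathcal{S})\cap\mathcal{B}_\infty$ generated by the sets $stem(C_n)\cap\mathcal{B}_\infty$, and showing each such generator reduces to a union of \emph{principal} cylinder sets intersected with $\mathcal{B}_\infty$. You bypass the stem-set and rogue apparatus entirely: your good-sets argument establishes the strictly stronger statement that $\mathcal{E}\cap\mathcal{B}_\infty\in\lc{\mathcal{R}}_b$ for \emph{every} $\mathcal{E}\in\lc{\mathcal{R}}$, covariant or not, by verifying the decomposition on an arbitrary labeled cylinder $cyl(\lc{D}_m)$. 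The geometric core is the same in both arguments --- the observation (made explicitly around equation \eqref{18074}) that a cyclic causet in $cyl(\lc{D}_m)$ has a principal restriction of cardinality at least $m$ extending $\lc{D}_m$, because the labelling constraint forces break pasts to be initial segments of $\mathbb{N}$ --- and both rely on lemma \ref{lemma_09062103} to put $\mathcal{B}_\infty$ itself in $\lc{\mathcal{R}}_b$. What your version buys is self-containedness and a slightly stronger conclusion ($\lc{\mathcal{R}}\cap\mathcal{B}_\infty\subseteq\lc{\mathcal{R}}_b$ as a statement about the full labeled algebra); what the paper's version buys is that it exhibits the stem-set generators explicitly, which connects directly to the $\mathcal{R}(\widehat{\mathcal{S}})$ story developed in section \ref{sec2}. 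The final reduction of the two cases of \eqref{18075} by complementation is identical in both.
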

\begin{proof}
First, we show that if $\mathcal{E}\in\mathcal{R}$ and $\mathcal{E}\subseteq \mathcal{B}_{\infty}$ then $\mathcal{E}\in\lc{\mathcal{R}}_b\cap{\mathcal{R}}$.

Let $\mathcal{E}\in\mathcal{R}$ and $\mathcal{E}\subseteq \mathcal{B}_{\infty}$. Then $\Theta\in\mathcal{E}^c$ and therefore $\mathcal{E}\in\mathcal{R}(\mathcal{S})$. Moreover, $\mathcal{E}$ is in the restriction of $\mathcal{R}(\mathcal{S})$ to $\mathcal{B}_{\infty}$, where the restriction is defined by,
 \begin{align} 
\mathcal{R}(\mathcal{S})\cap\mathcal{B}_{\infty}:=\{ \mathcal{G} \cap \mathcal{B}_{\infty} |\mathcal{G} \in\mathcal{R}(\mathcal{S})\},
 \end{align}
 and is countably generated by the collection,
  \begin{align}
  \mathcal{S}\cap\mathcal{B}_{\infty}:= \{stem(C_n)\cap \mathcal{B}_{\infty}|stem(C_n)\in\mathcal{S}\}.
  \end{align} 
  Using definition \eqref{18073} we can write, \begin{align}\label{18074} stem(C_n)\cap \mathcal{B}_{\infty}=\big(\bigcup cyl(\lc{D}_m)\big)\cap \mathcal{B}_{\infty},\end{align} where the union is over all causets $\lc{D}_m$ for all $m$ that  contain a stem that  is order-isomorphic to a representative of $C_n$. Now, note that any causet in $cyl(\lc{D}_m)\cap \mathcal{B}_{\infty}$ contains infinitely many breaks and therefore has a principal stem of cardinality $>m$ that contains $\lc{D}_m$ as stem. Therefore, we can restrict the domain of the union in \eqref{18074} to the set of \textit{principal} causets $\lc{D}_m$ for all $m$ that  contain a stem that  is order-isomorphic to a representative of $C_n$. Combined with lemma \ref{lemma_09062103}, this proves that every set of the form \eqref{18074} is contained in $\lc{\mathcal{R}}_b$ and hence $\mathcal{E}\in\lc{\mathcal{R}}_b$. Since $\mathcal{E}$ is covariant by assumption, $\mathcal{E}\in\lc{\mathcal{R}}_b\cap{\mathcal{R}}$.

Now, let $\mathcal{A}\in\mathcal{R}_b$. By definition \eqref{18075}, either $\mathcal{B}_{\infty}^c\subseteq \mathcal{A}$ or $\mathcal{B}_{\infty}^c \subseteq \mathcal{A}^c$. In the latter case, $\mathcal{B}_{\infty}^c \subseteq \mathcal{A}^c \implies \mathcal{A}\subseteq \mathcal{B}_{\infty} \implies  \mathcal{A}\in\lc{\mathcal{R}}_b\cap{\mathcal{R}}$. In the former case, $\mathcal{B}_{\infty}^c\subseteq \mathcal{A} \implies \mathcal{A}^c \subseteq \mathcal{B}_{\infty} \implies \mathcal{A}^c \in\lc{\mathcal{R}}_b\cap{\mathcal{R}}\implies \mathcal{A}\in\lc{\mathcal{R}}_b\cap{\mathcal{R}}$.\end{proof}

Corollary \ref{18071} and lemma \ref{18072} together imply theorem \ref{theorem_sec_6_1}.

 \subsection{Cyclic models as random walks up a new tree}\label{subsec:reducedposc} 

In the context of random walks, the upshot of theorem \ref{theorem_sec_6_1} is that one can conceive of the cyclic models that  satisfy $\mu(\mathcal{B}_{\infty})=1$ as random walks up a new tree which we dub \textit{reduced poscau} (Fig. \ref{reduced_poscau}).
\begin{figure}[htpb]
\centering
    \includegraphics[width=.48\textwidth]{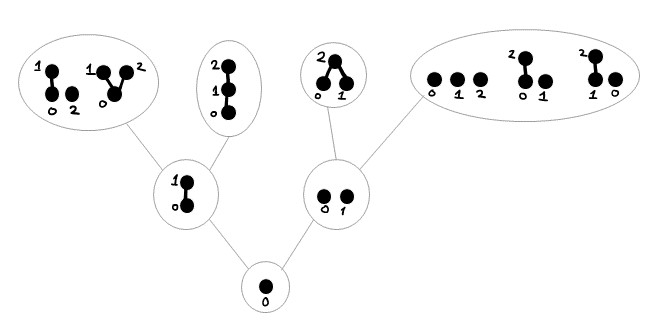}
  \caption[Reduced poscau]{The first three levels of reduced poscau.}\label{reduced_poscau}
\end{figure}

One can think of reduced poscau as obtained from labeled poscau by merging groups of nodes into one, so that in reduced poscau each node is a \textit{collection} of causets. Principal causets are never merged with others, so each principal causet is contained in a node on its own. Given a principal causet, all the causets that  are directly above it in labeled poscau (except for the one that  is itself a principal causet) are merged into one node. Thus, in reduced poscau each principal causet has exactly two nodes directly above it. A node that  contains $r$ causets has $r+1$ nodes above it: $r$ nodes, each of which contains a single principal causet (formed by adding a single element above each of the $r$ causets), and an additional node that  contains all other causets that  in labeled poscau are directly above any of the $r$ causets. The meaning of each node is ``one of these causets is a stem in the growing causet''.

There is a correspondence between the nodes of reduced poscau and sets of $\lc{\Omega}$. A node that  contains a principal causet $\lc{C}_n$ corresponds to the principal cylinder set $cyl(\lc{C}_n)$. A node that  contains non-principal causets corresponds to a set that  is constructed recursively from principal cylinder sets as follows. Let $a$ be a non-principal node directly above the node $b$ and let $\lc{C}_{n}^1,...,\lc{C}_{n}^r$ be the principal causets contained in the remaining nodes directly above $b$. Then the set corresponding to $a$ is equal to the set corresponding to $b$ take away $\bigcup_{i=1}^r cyl(\lc{C}_{n}^i)$. If $b$ is a non-principal node then the set corresponding to it can be constructed in the same manner. One works recursively down the path and the process ends when reaching a principal node.

It follows that the $\sigma$-algebra generated by reduced poscau via this correspondence is equal to $\lc{\mathcal{R}_b}$. The collection of all sets corresponding to nodes in reduced poscau is a semi-ring, and therefore each random walk on reduced poscau induces a probability measure on ${\lc{\mathcal{R}}_b}$, where the measure of each set in the semi-ring is equal to the probability of reaching the corresponding node. Thus, reduced poscau can be used to define cyclic dynamics. Note that the set of principal cylinder sets alone does not form a semi-ring and therefore we cannot work with it directly to define a measure on ${\lc{\mathcal{R}}_b}$. The sets that  correspond to non-principal nodes complete the collection into a semi-ring and allow to define a measure.

Note that each walk on reduced poscau induces a walk on labeled poscau (the proof is analogous to that of lemma 4.9 in \cite{Dowker:2019qiz}) so that the walks on labeled poscau and on reduced poscau yield the same class of probability measures on the covariant $\mathcal{R}_b$. However, the different structures of the two trees could lead to different formulations of physical constraints in terms of transition probabilities and hence to identifying different classes of physically interesting dynamics.

\subsection{Complex Transitive Percolation}\label{sec:CTP}
The formulation of causal set dynamics in terms of probability measures and random walks is a precursor to the fully quantum dynamics to be expressed as a decoherence functional. In \citep{Dowker:2010qh} it was proposed that a decoherence functional can be obtained from a complex measure on $\lc{\mathcal{R}}$, itself derived as follows: replace the real transition probabilities on labeled poscau by complex transition amplitudes $A(\lc{C}_n\rightarrow\lc{C}_{n+1})$ that  satisfy the sum-rule,
\begin{align}\label{eq_sumrule_sec5}
\sum_i A(\lc{C}_n\rightarrow \lc{C}^i_{n+1})=1,
\end{align}
where $i$ labels the nodes directly above $\lc{C}_n$. Denote the amplitude of reaching $\lc{C}_n$ by $A(\lc{C}_n)$. Then $A$ defines a complex function (a ``pre-measure'') on the cylinder sets whose extension to a complex measure on $\lc{\mathcal{R}}$---\textit{if it exists}---is the desired complex measure from which the decoherence functional can be obtained. The following criteria for the existence of an extension to a complex measure on $\lc{\mathcal{R}}$ were given in \cite{Surya:2020cfm}. Let $\zeta$ be the following real function on the ground-set of labeled poscau, \begin{align}\label{eq_04071901}\begin{split}
 &\zeta(\lc{C}_n):=\sum_{i} |A(\lc{C}_{n}\rightarrow \lc{C}^i_{n+1})|-|\sum_{i} A(\lc{C}_{n}\rightarrow \lc{C}^i_{n+1})|\\
 & \hspace{11mm}=\sum_{i} |A(\lc{C}_{n}\rightarrow \lc{C}^i_{n+1})|-1,\\\end{split}\end{align}
where the equality follows from \eqref{eq_sumrule_sec5}.
For each $n>0$, define the maximum and minimum of $\zeta$ over level $n$ in labeled poscau (where level 1 contains the single-element causet, level 2 contains the 2-element causets etc.),
\begin{align}\label{eq_04071901_2}\begin{split}
&\zeta_n^{max}:=\max_{\text{level } n} \zeta({\lc{C}_n}),\\
&\zeta_n^{min}:=\min_{\text{level } n} \zeta({\lc{C}_n}).\end{split}\end{align}
Then, an extension to a complex measure on $\lc{\mathcal{R}}$ exists if $\sum_{n=1}^{\infty} \zeta^{max}_n<\infty$. An extension to a complex measure on $\lc{\mathcal{R}}$ does not exist if $\sum_{n=1}^{\infty} \zeta^{min}_n=\infty$.

We now apply this technology to the discussion of Complex Transitive Percolation, a family of cyclic models defined by \begin{align}\label{CTP} A(\lc{C}_n)=p^Lq^{\binom{n}{2}-R},\end{align} where $p\in\mathbb{C}$, $q=1-p$, and $L$ and $R$ are the number of links and relations in $\lc{C}_n$,  respectively. In these models, an extension to a measure on $\lc{\mathcal{R}}$ exists if and only if $p\in[0,1]$ (\textit{i.e.} when the amplitude $A$ reduces to a real probability) \cite{Dowker:2010qh,Surya:2020cfm}. But, with our new understanding of $\mathcal{R}_b$ as the algebra of covariant events in this model, we can ask whether there exists an extension to the strictly smaller $\lc{\mathcal{R}}_b$ when $p\not\in[0,1]$. (\textit{Note that this question is equivalent to asking whether there are measures on $\lc{\mathcal{R}}_b$ that  do not extend to a measure on $\lc{\mathcal{R}}$. When the measures are real probability measures, the answer is no: every measure on $\lc{\mathcal{R}}_b$ extens to a measure on $\lc{\mathcal{R}}$. However, here we are concerned with complex measures, and it is unknown to the authors whether an analogous theorem holds in this case.}) To this effect, we can apply equations (\ref{eq_sumrule_sec5}-\ref{eq_04071901_2}) and the above criteria for extension by replacing $\lc{\mathcal{R}}$ with $\lc{\mathcal{R}}_b$ and the $\lc{C}_n$ with the nodes of reduced poscau. This generalisation is possible since the results of \cite{Surya:2020cfm} rely only on the fact that labeled poscau is a finite-valency tree with no maximal elements, see \cite{Zalel:2021gkv} for further discussion. The transition amplitudes on reduced poscau are fixed by requiring that \eqref{CTP} holds for principal causets and by imposing \eqref{eq_sumrule_sec5}. We solved numerically for $\zeta_n^{min}$ as a function of $p$ for $n=2,3,4$. Our results (Fig.\ref{plots_b}) suggest that for any $p$, there is a level $m$ above which the function $\zeta$ takes its minimum on the nodes that  contain principal causets, \textit{i.e.} $\zeta_n^{min}=|p|+|1-p|-1$ for all $n>m$. If this is borne out then $\sum_{n=1}^{\infty} \zeta^{min}_n=\infty$ when $p\not\in[0,1]$ and no extension to $\lc{\mathcal{R}}_b$ exists, meaning that Complex Transitive Percolation does not give rise to a well-defined quantum dynamics in this framework. Whether other classical cyclic dynamics can give rise to quantum dynamics via this formalism is an open question. 

\begin{figure}[h]
    \centering
\begin{subfigure}{.1\textwidth}
    \includegraphics[width=0.5\textwidth]{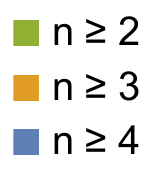}
    \end{subfigure}\hfill%
    \begin{subfigure}{.4\textwidth}
    \includegraphics[width=.8\textwidth]{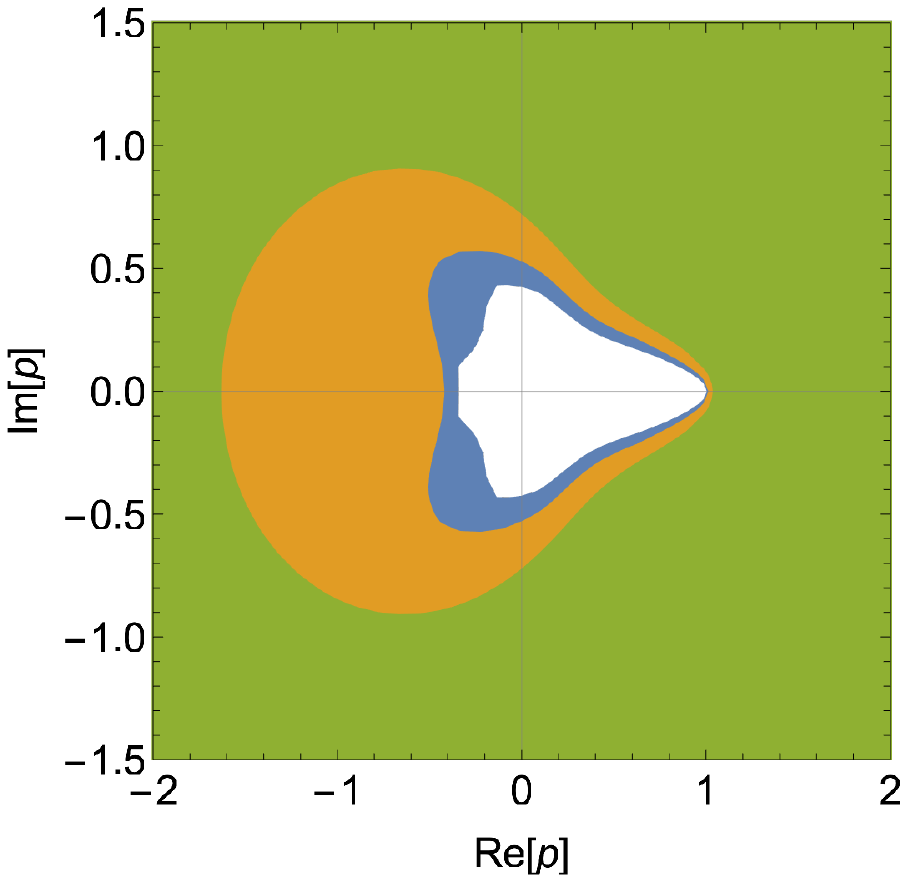}
    \end{subfigure}\hfill%
    \caption[Complex Transitive Percolation with breaks]{
    The coloured regions indicate the values of $p\in\mathbb{C}$ for which $\zeta_n^{min}=|p|+|1-p|-1$ on reduced poscau for $n=2,3,4$ in Complex Transitive Percolation models.}%
    \label{plots_b}%
\end{figure}

\section{Observables from principal stem-sets}\label{sec2}

In section \ref{chap_6_o_subsec} we identified $\mathcal{R}_b$ as a complete set of observables in cyclic dynamics, but it is unclear what is the physical meaning of each observable in this set. In this section, we identify a second algebra that  forms a complete set of observables in cyclic dynamics and in which each observable has a clear physical interpretation. 

We begin with terminology. We say that an order $C$ is a \textit{principal order} if its representative is a principal causet. We say that $stem(C)$ is a \textit{principal stem-set} if $C$ is a principal order. We define $\widehat{\mathcal{S}}\subset\mathcal{S}$ to be the set of principal stem-sets and write $\mathcal{R}(\widehat{\mathcal{S}})$ to denote the $\sigma$-algebra that  they generate. We can now state our theorem:

\begin{theorem}\label{theorem3} Given $\mathcal{E}\in\mathcal{R}$, there exists some $\mathcal{E}'\in \mathcal{R}(\widehat{\mathcal{S}})$ such that $\mathcal{E}\triangle\mathcal{E}'\subset\mathcal{B}_{\infty}^c$.\end{theorem}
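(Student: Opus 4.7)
The plan is to deduce the theorem from the analogous result for the full stem algebra \cite{Brightwell:2002vw}—which is already invoked in the proof of Lemma \ref{18072}—by a translation at the level of generators that swaps arbitrary stem-sets for unions of principal stem-sets, with a controlled error living in $\mathcal{B}_\infty^c$.

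First I would pass from $\mathcal{E}$ to its cyclic part $\mathcal{F}:=\mathcal{E}\cap\mathcal{B}_\infty$. This $\mathcal{F}$ is covariant, is contained in $\mathcal{B}_\infty$, and satisfies $\mathcal{E}\triangle\mathcal{F}=\mathcal{E}\cap\mathcal{B}_\infty^c\subseteq\mathcal{B}_\infty^c$. The argument used at the start of the proof of Lemma \ref{18072} (combining Lemma \ref{lemma0} with $\mathcal{B}_\infty\in\mathcal{R}(\mathcal{S})$) then yields $\mathcal{F}\in\mathcal{R}(\mathcal{S})$.

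Next I would establish the key generator-level identity: for every order $C_n$,
\begin{align*}
stem(C_n)\ \triangle\ \bigcup_{\substack{D\ \text{principal}\\ C_n\ \text{a stem in}\ D}}stem(D)\ \subseteq\ \mathcal{B}_\infty^c,
\end{align*}
where the union ranges over the countably many principal orders $D$ containing $C_n$ as a stem; the right-hand side therefore lies in $\mathcal{R}(\widehat{\mathcal{S}})$. One inclusion follows from transitivity of \emph{is-a-stem-of}: if $\lc{C}$ has a stem isomorphic to a principal order $D$ which in turn has $C_n$ as a stem, then $\lc{C}$ has a stem isomorphic to $C_n$. For the reverse inclusion restricted to $\mathcal{B}_\infty$, I would use the segmentation of a cyclic causet into finite pieces $\lc{S}_1,\lc{S}_2,\ldots$ separated by breaks: given $\lc{C}\in\mathcal{B}_\infty\cap stem(C_n)$, pick $k$ large enough that the break past $\lc{A}_{n_k}=\lc{S}_1\cup\cdots\cup\lc{S}_k$ already contains a stem of $\lc{C}$ isomorphic to $C_n$; then $\lc{A}_{n_k}\cup\{s\}$, with $s$ any minimal element of $\lc{S}_{k+1}$, is a principal stem of $\lc{C}$, and its isomorphism class $D$ contains $C_n$ as a stem.

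To extend this identity from the generators to all of $\mathcal{R}(\mathcal{S})$ I would use an abstract $\sigma$-algebra argument: the collection
\begin{align*}
\mathcal{M}:=\{\mathcal{A}\in\mathcal{R}(\mathcal{S}):\exists\,\mathcal{A}'\in\mathcal{R}(\widehat{\mathcal{S}})\text{ with }\mathcal{A}\triangle\mathcal{A}'\subseteq\mathcal{B}_\infty^c\}
\end{align*}
is closed under complements (since $A\triangle B=A^c\triangle B^c$) and under countable unions (since $\bigcup_i A_i\triangle\bigcup_i B_i\subseteq\bigcup_i(A_i\triangle B_i)$), hence is a $\sigma$-algebra; since the previous paragraph gives $\mathcal{S}\subseteq\mathcal{M}$, we conclude $\mathcal{M}=\mathcal{R}(\mathcal{S})$. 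Applied to $\mathcal{F}$, this produces $\mathcal{F}'\in\mathcal{R}(\widehat{\mathcal{S}})$ with $\mathcal{F}\triangle\mathcal{F}'\subseteq\mathcal{B}_\infty^c$, and setting $\mathcal{E}':=\mathcal{F}'$ gives $\mathcal{E}\triangle\mathcal{E}'\subseteq(\mathcal{E}\triangle\mathcal{F})\cup(\mathcal{F}\triangle\mathcal{F}')\subseteq\mathcal{B}_\infty^c$ as required.

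I expect the main obstacle to be the generator-level step—specifically, the claim that in any cyclic causet every finite stem is contained in some principal stem. This is where the segmented structure of cyclic causets enters essentially: the break pasts $\lc{A}_{n_k}$ form a nested family of finite stems exhausting $\lc{C}$, and each of them extends to a principal stem $\lc{A}_{n_k}\cup\{s\}$ by adjoining a minimal element $s$ of the following segment $\lc{S}_{k+1}$.
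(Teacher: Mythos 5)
Your proof is correct, but it takes a genuinely different route from the paper's. The paper establishes the stronger exact statement $\mathcal{R}(\widehat{\mathcal{S}})\cap\mathcal{B}_{\infty}=\mathcal{R}\cap\mathcal{B}_{\infty}$ by descriptive set theory: it first shows combinatorially that principal stem-sets separate non-isomorphic cyclic causets (lemma \ref{lemma_0706}) and then invokes the results of \cite{Mackey:1976} on analytic Borel spaces (a countable separating family in a countably separated quotient of a standard Borel space is generating, lemma \ref{lemma1506}) to conclude that $\widehat{\mathcal{S}}\cap\mathcal{B}_{\infty}$ generates the covariant algebra restricted to $\mathcal{B}_{\infty}$; the theorem then follows in one line. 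You instead bootstrap from the theorem of \cite{Brightwell:2002vw} for the full stem algebra and perform an explicit generator-level replacement: each $stem(C_n)$ agrees on $\mathcal{B}_{\infty}$ with the countable union of principal stem-sets $stem(D)$ over principal orders $D$ containing $C_n$ as a stem, because the break pasts $\lc{A}_k$ of a cyclic causet are the initial segments $[0,n_k-1]$ and hence exhaust it, so every finite stem sits inside some $\lc{A}_k$ and therefore inside the principal stem $\lc{A}_k\cup\{s\}$; the good-sets principle then propagates the approximation from $\mathcal{S}$ to all of $\mathcal{R}(\mathcal{S})$. That combinatorial kernel is exactly the observation the paper uses inside lemma \ref{18072}, so your argument is in a sense a reorganisation of ingredients already present elsewhere in the paper. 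What your route buys is elementarity and constructiveness: no analytic Borel space machinery, and an explicit description of the approximating event on the generators. What it gives up is self-containedness and strength: it leans on the quoted theorem of \cite{Brightwell:2002vw} and on the fact $\Theta\subseteq\mathcal{B}_{\infty}^c$ (needed for $\mathcal{F}=\mathcal{E}\cap\mathcal{B}_{\infty}\in\mathcal{R}(\mathcal{S})$), which the paper asserts in lemma \ref{18072} without proof and whose natural justification itself requires the separation statement of lemma \ref{lemma_0706} (two non-isomorphic cyclic causets with identical stems would both be rogues, since $\mathcal{B}_{\infty}\in\mathcal{R}(\mathcal{S})$ forces a rogue's partner to be cyclic too) — so you have not fully escaped that ingredient, only outsourced it; and your argument yields only the approximate statement $\mathcal{E}\triangle\mathcal{E}'\subseteq\mathcal{B}_{\infty}^c$ rather than the exact identity of restricted $\sigma$-algebras, though for the theorem as stated this makes no difference.
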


In perfect analogy to the result of \cite{Brightwell:2002vw}, the upshot of lemma \ref{theorem3} is that $\mathcal{R}(\widehat{\mathcal{S}})$ exhausts the set of observables in any cyclic dynamics, since the measure of any $\mathcal{E}\in\mathcal{R}$ is fixed by the measure of some $\mathcal{E}'\in\mathcal{R}(\widehat{\mathcal{S}})$ via $\mu(\mathcal{E})=\mu(\mathcal{E}')$.

Importantly, each event in $\mathcal{R}(\widehat{\mathcal{S}})$ is equivalent to a logical combination of statements about which principal orders are contained as stems in the growing causet, giving the observables in cyclic dynamics a clear physical interpretation. We will make the notion of “an order contained as a stem” precise after lemma \ref{lemma_0706}.

We now prove theorem \ref{theorem3} through a series of lemmas.

\begin{lemma}\label{lemma1506} Let $X$ denote a set of points and let $\sim$ denote an equivalence relation on $X$. Let $(X,R)$ denote a standard Borel space. Let $(X,T)$ denote the Borel space derived from $(X,R)$ via the ``covariance'' property:
\begin{align}\label{eq1506}T:=\{E\in R | a\in E \text{ and } b\sim a \implies b\in E\}. \end{align} Let $F\subset T$ denote a countable family of sets that  separates $X$ up to equivalence under $\sim$, \textit{i.e.} for any $a\not\sim b\in X$ there exists a set in $F$ that  contains $a$ but not $b$. Then $F$ generates $T$. \end{lemma}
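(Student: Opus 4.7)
I will focus on the nontrivial inclusion $T \subseteq \sigma(F)$; the reverse direction $\sigma(F) \subseteq T$ is immediate since $F \subseteq T$ and $T$ is a $\sigma$-algebra. Write $T_0 := \sigma(F)$. The crucial observation is that $T_0$ is a \emph{countably} generated sub-$\sigma$-algebra of the standard Borel $\sigma$-algebra $R$, which is exactly the setting in which the atomic machinery of descriptive set theory applies.

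The first step is to identify the atoms of $T_0$. Because $T_0$ is generated by the countable family $F$, the atom of $T_0$ through a point $a \in X$ is the set of points indistinguishable from $a$ by every member of $F$. The hypothesis that every element of $F$ lies in $T$ (hence is $\sim$-invariant) together with the separating property force this atom to coincide exactly with the $\sim$-equivalence class of $a$. Thus the $T_0$-atoms are precisely the $\sim$-classes.

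Next, take any $E \in T$. By the covariance condition \eqref{eq1506}, $E$ is a union of $\sim$-classes, hence a union of $T_0$-atoms. At this stage I would invoke Blackwell's theorem: in a standard Borel space, a Borel set belongs to a countably generated sub-$\sigma$-algebra if and only if it is a union of the atoms of that sub-$\sigma$-algebra. This immediately yields $E \in T_0 = \sigma(F)$, closing the argument.

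The only real obstacle is the descriptive set theoretic input above, and it is the reason the standard Borel hypothesis on $(X,R)$ is essential. If one prefers a self-contained route rather than a black-box citation, I would enumerate $F = \{F_1,F_2,\dots\}$, build the Borel map $\phi : X \to \{0,1\}^{\mathbb{N}}$ whose $k$-th coordinate is the characteristic function of $F_k$, note that $\phi(a) = \phi(b)$ iff $a \sim b$, and then apply the Lusin separation theorem to the disjoint analytic subsets $\phi(E)$ and $\phi(E^c)$ of $\{0,1\}^{\mathbb{N}}$ to obtain a Borel $B \subseteq \{0,1\}^{\mathbb{N}}$ with $\phi(E) \subseteq B$ and $\phi(E^c) \subseteq B^c$; then $E = \phi^{-1}(B) \in \sigma(F)$ by construction. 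Every other step in the proof is essentially definitional.
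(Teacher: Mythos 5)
Your proof is correct, and it reaches the conclusion by a genuinely different route from the one in the paper, although both ultimately rest on the same descriptive-set-theoretic core. You stay inside $X$ and treat $\sigma(F)$ as a countably generated sub-$\sigma$-algebra of the standard Borel $\sigma$-algebra $R$: you identify its atoms with the $\sim$-classes (using invariance of each $F_k$ for one inclusion and the separating property for the other), observe that every $E\in T$ is a union of those atoms, and invoke the Blackwell--Mackey theorem --- or, in your self-contained variant, run its proof directly via the map $\phi:X\to\{0,1\}^{\mathbb{N}}$ and the Lusin separation theorem applied to the disjoint analytic sets $\phi(E)$ and $\phi(E^c)$. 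The paper instead passes to the quotient Borel space $(X',R')$ induced by the projection onto $\sim$-classes, cites Mackey's results that a countably separated quotient of a standard Borel space is analytic and that in an analytic Borel space any countable separating family generates, and then pulls the conclusion back through the bijection between $T$ and $R'$. Your version buys a proof that never has to set up the quotient Borel structure or verify that the projection induces a bijection of $\sigma$-algebras, and the Lusin-separation variant is fully self-contained; the paper's version buys a shorter argument that delegates all the work to two quotable statements about analytic Borel spaces. The two cited black boxes (Blackwell--Mackey versus Mackey's corollary on separating families) are essentially equivalent facts, both proved by the separation argument you sketch, so there is no gap on either side --- only a different choice of where to stand while applying it.
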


\begin{proof} Let $X'$ denote the set of equivalence classes of elements of $X$ under $\sim$ and let $\phi:X\rightarrow X'$ denote the projection that  maps each $a\in X$ onto its equivalence class. Given $(X,R)$, the quotient Borel space induced by $\phi$ is denoted by $(X',R')$, where $R'$ is the set of all $A'\subset X'$ such that $\phi^{-1}(A')\in R$. 

The quotient Borel space $(X',R')$ is analytic whenever it is countably separated (second theorem on p.74 in \cite{Mackey:1976}). In an analytic Borel space, any countable separating family is a generating family (corollary on p.73 in \cite{Mackey:1976}). Therefore, any countable separating family in $(X',R')$ is a generating family.

Since $T$ is derived from $R$ via the ``covariance'' property, $\phi$ induces a bijection between $T$ and $R'$. Under this bijection, $F\subset T$ is mapped onto a countable separating family $F'$ in $R'$. Since $F'$ generates $R'$, and since bijections preserve countable set operations, $F$ generates $T$. \end{proof}

\begin{lemma}\label{lemma_0706} For any $\lc{D}\not\cong \lc{E}\in \mathcal{B}_{\infty}$ there exists a principal order $C_n$ such that $\lc{D}\in stem(C_n)$ and $\lc{E}\not\in stem(C_n)$. \end{lemma}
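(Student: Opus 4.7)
The plan is to prove the lemma by contradiction, constructing an order-isomorphism $\lc{D}\to\lc{E}$ under the hypothesis that no principal order $C_n$ distinguishes them in the prescribed direction. Two structural facts about a cyclic causet $\lc{C}$ underpin the construction. (i) For each break $(\lc{A}_k,\lc{B}_k)$ of $\lc{C}$, any minimal element $z_k\in\lc{B}_k$ satisfies $past(z_k)=\lc{A}_k$ and therefore lies in $\lc{A}_{k+1}$, so the principal stems $P_k:=\lc{A}_k\cup\{z_k\}$ form a strictly nested exhaustion $P_1\subsetneq P_2\subsetneq\cdots$ with $\bigcup_k P_k=\lc{C}$. (ii) A cyclic causet has only finitely many principal stems of each cardinality $m$: any element $y$ with $|past(y)|=m-1$ is forced into the unique segment $\lc{A}_k\setminus\lc{A}_{k-1}$ with $n_{k-1}\le m-1<n_k$, and each such segment is finite.

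Writing $P_k^D$ for the chain of break-past principal stems of $\lc{D}$ produced by (i), the contrapositive hypothesis supplies a principal stem of $\lc{E}$ of order $[P_k^D]$ for every $k$. I then consider the rooted tree whose level-$k$ nodes are pairs $(W,\phi)$ with $W$ a principal stem of $\lc{E}$ of order $[P_k^D]$ and $\phi:P_k^D\to W$ an order-isomorphism, and whose edges $(W_k,\phi_k)\to(W_{k+1},\phi_{k+1})$ require $W_k\subsetneq W_{k+1}$ and $\phi_{k+1}|_{P_k^D}=\phi_k$. Restriction of any level-$(k+1)$ node to $P_k^D$ lands on a valid level-$k$ node because the image of a principal substem under an order-isomorphism is itself principal; non-emptiness of every level follows from the hypothesis; and finite branching follows from (ii) combined with the finiteness of $|\mathrm{Aut}(P_k^D)|$. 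K\"onig's lemma then yields an infinite compatible branch $(\phi_k)_k$, and $\phi:=\bigcup_k\phi_k$ is an order-isomorphism from $\lc{D}=\bigcup_k P_k^D$ onto the down-set $\lc{E}':=\bigcup_k W_k\subseteq\lc{E}$.

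The main obstacle is upgrading $\phi$ to an isomorphism onto all of $\lc{E}$, i.e.\ showing $\lc{E}'=\lc{E}$. My plan is to enrich the K\"onig construction into a back-and-forth by fixing an enumeration $e_1,e_2,\dots$ of $\lc{E}$ and, at each level $k$, additionally demanding that $\{e_1,\dots,e_k\}\subset W_k$ while allowing $P_k^D$ to be replaced by a larger break-past principal stem of $\lc{D}$ as needed to house the preimages. The extension step uses the cyclicity of $\lc{E}$ to seat each $e_k$ inside some break-past principal stem of $\lc{E}$, together with the hypothesis on principal-stem orders to identify a matching enlargement in $\lc{D}$. Verifying that the enriched tree retains non-empty and finitely branching levels at every level is the delicate combinatorial point; once this is done, $\phi:\lc{D}\to\lc{E}$ is an order-isomorphism, contradicting $\lc{D}\not\cong\lc{E}$ and establishing the lemma.
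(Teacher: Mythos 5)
Your K\"onig-lemma construction of an order-embedding is sound as far as it goes: facts (i) and (ii) are both correct (for (ii), any $y$ whose strict past has $m-1$ elements must lie in the finite down-set $\lc{A}_k$ for the least $k$ with $|\lc{A}_k|>m-1$, since every element of $\lc{B}_k$ has at least $|\lc{A}_k|$ elements below it); each level of your tree is finite and non-empty, restrictions of level-$(k+1)$ nodes are valid level-$k$ nodes, and the limit $\phi$ is an order-isomorphism of $\lc{D}$ onto an infinite down-set $\lc{E}'\subseteq\lc{E}$. The genuine gap is the step you explicitly leave open: you never establish $\lc{E}'=\lc{E}$, and the back-and-forth you sketch for it cannot be run from your hypothesis. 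The contradiction hypothesis is one-directional --- every principal order that is a stem of $\lc{D}$ is a stem of $\lc{E}$ --- so when you try to ``seat $e_k$'' in a break-past principal stem of $\lc{E}$ you have no guarantee of any ``matching enlargement in $\lc{D}$''; that is precisely the implication the hypothesis does not supply, so the delicate combinatorial point you defer is not merely unverified but unsupported.

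Fortunately the gap closes with one observation and no back-and-forth: an infinite down-set of a cyclic causet is the whole causet. The break pasts $\lc{A}_1\subset\lc{A}_2\subset\cdots$ of $\lc{E}$ exhaust $\lc{E}$ (in a natural labeling each is an initial segment of the labels, of unbounded cardinality), and since $\lc{E}'$ is infinite it meets $\lc{B}_j$ for every $j$; being a down-set it therefore contains every $\lc{A}_j$, whence $\lc{E}'=\lc{E}$ and $\lc{D}\cong\lc{E}$, the desired contradiction. With this patch your argument is complete, and it in fact proves the asymmetric statement exactly as written, since only the $\lc{D}$-to-$\lc{E}$ inclusion of principal stem orders is used. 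For comparison, the paper's proof is entirely different and much shorter: it works directly with the break pasts $D_i$, $E_i$, picks $n$ with $D_i\neq E_i$ for all $i\geq n$, and shows that the covering order of the larger of $D_n$, $E_n$ is a principal stem of one causet but not the other, the key point being that $\widehat{E_n}$ is the unique stem order of its cardinality in $\lc{E}$. That route avoids K\"onig's lemma altogether, though its ``without loss of generality'' step only delivers the symmetric separation property (which is all that corollary \ref{cor1506} requires).
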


We will need the following terminology. Given a finite order $C$ and a (finite or infinite) causet $\lc{D}$, we say that $C$ is a stem in $\lc{D}$ if there exists a stem in $\lc{D}$ that  is order-isomorphic to a representative of $C$. If $C$ is a stem in $\lc{D}$ and $\lc{D}$ is a representative of $D$, then we say that $C$ is a stem in $D$. The cardinality $|C|$ of an order $C$ is defined to be equal to the cardinality of a representative of it. We use the term $n$-stem to mean a stem of cardinality $n$. We say than an order contains a break if its representative contains a break. The covering causet of $\lc{C}$ is the principal causet formed from $\lc{C}$ by placing an element above all $x\in\lc{C}$. The covering order $\widehat{C}$ of $C$ is the order whose representative is a covering causet of some representative of $C$.

\begin{proof}
Let $\lc{D},\lc{E}\in\mathcal{B}_{\infty}$. Let $D_i$ denote the unlabeled past of the $i^{th}$ break in $\lc{D}$, so $D_i$ is an order that is a stem in $\lc{D}$. Note that $D_i$ may or may not be principal. Define $E_i$ similarly for $\lc{E}$. Note that $D_i$ and $E_i$ each contain $i-1$ breaks and therefore either $D_i=E_i$ or $D_i\not= E_k \forall k\in\mathbb{N}$.

Let $\lc{D}\not\cong\lc{E}$. Then there exists an $n\in\mathbb{N}$ such that $D_i\not= E_i$ for all $i\geq n$.

Without loss of generality, let $|D_n|\geq|E_n|$. We will show that $\lc{E}\notin stem(\widehat{D_n})$. Suppose for contradiction that $\widehat{D_n}$ is a stem in $\lc{E}$. Then $\widehat{E_n}$ is the only $|\widehat{E_n}|$-stem in $\widehat{D_n} \implies \widehat{E_n}$ is the only $|\widehat{E_n}|$-stem in $\lc{D}$ (since $\widehat{D_n}$ is the only $|\widehat{D_n}|$-stem in $\lc{D}$, and ``not a stem in any stem is not a stem'')$\implies E_n$ is an unlabeled past of a break in $\lc{D}$$\implies$ there exists some $i\in\mathbb{N}$ such that $D_i=E_n$, which is a contradiction.
\end{proof}

\begin{corollary}\label{cor1506} The countable family of sets, \begin{align}\label{1807} \widehat{\mathcal{S}}\cap \mathcal{B}_{\infty}:=\{stem(C)\cap \mathcal{B}_{\infty}|stem(C)\in \widehat{\mathcal{S}}\},\end{align} separates $\mathcal{B}_{\infty}$ up to equivalence under order-isomorphisms, \textit{i.e.} for any $\lc{D}\not\cong \lc{E}\in \mathcal{B}_{\infty}$ there exists a set in $\widehat{\mathcal{S}}\cap \mathcal{B}_{\infty}$ that  contains $\lc{D}$ but not $\lc{E}$. \end{corollary}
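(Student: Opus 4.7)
The corollary is essentially a direct repackaging of Lemma \ref{lemma_0706}, so my plan is to show that the two ingredients needed — countability of the family and the separation property — are both immediate consequences of what has already been proved.

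First I would note countability. The family $\widehat{\mathcal{S}}$ is indexed by principal orders, and since each principal order is (the isomorphism class of) a finite causet on a finite ground-set, there are only countably many such orders. Hence $\widehat{\mathcal{S}}$ is countable and so is $\widehat{\mathcal{S}} \cap \mathcal{B}_{\infty} = \{ \mathrm{stem}(C) \cap \mathcal{B}_{\infty} \mid \mathrm{stem}(C) \in \widehat{\mathcal{S}} \}$.

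Next I would establish separation. Fix $\lc{D} \not\cong \lc{E}$ with $\lc{D}, \lc{E} \in \mathcal{B}_{\infty}$. By Lemma \ref{lemma_0706} applied to this pair, there exists a principal order $C_n$ with $\lc{D} \in \mathrm{stem}(C_n)$ and $\lc{E} \notin \mathrm{stem}(C_n)$. Since $C_n$ is principal, $\mathrm{stem}(C_n) \in \widehat{\mathcal{S}}$, and therefore $\mathrm{stem}(C_n) \cap \mathcal{B}_{\infty} \in \widehat{\mathcal{S}} \cap \mathcal{B}_{\infty}$. Because $\lc{D} \in \mathcal{B}_{\infty}$ by hypothesis and $\lc{D} \in \mathrm{stem}(C_n)$, we have $\lc{D} \in \mathrm{stem}(C_n) \cap \mathcal{B}_{\infty}$; on the other hand $\lc{E} \notin \mathrm{stem}(C_n)$ forces $\lc{E} \notin \mathrm{stem}(C_n) \cap \mathcal{B}_{\infty}$. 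This exhibits the required separating set.

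There is essentially no obstacle here: the corollary is just the observation that intersecting each principal stem-set with $\mathcal{B}_{\infty}$ preserves the separating property, because both $\lc{D}$ and $\lc{E}$ already lie in $\mathcal{B}_{\infty}$ by assumption. All the real work was done in Lemma \ref{lemma_0706}, where the principal covering order $\widehat{D_n}$ of the past of the $n$-th break was used to discriminate the two non-isomorphic cyclic causets. The role of this corollary is purely to set up the input for Lemma \ref{lemma1506}, whose conclusion (applied with $X = \mathcal{B}_{\infty}$, $R$ the trace of $\widetilde{\mathcal{R}}$ on $\mathcal{B}_{\infty}$, and $F = \widehat{\mathcal{S}} \cap \mathcal{B}_{\infty}$) will then yield Theorem \ref{theorem3}.
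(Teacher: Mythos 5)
Your proposal is correct and matches the paper, which presents this statement as an immediate corollary of Lemma \ref{lemma_0706} with no separate proof: since $\lc{D},\lc{E}\in\mathcal{B}_{\infty}$, intersecting the separating principal stem-set with $\mathcal{B}_{\infty}$ changes nothing, and $\widehat{\mathcal{S}}$ is countable because there are only countably many finite orders. Your closing remark about its role as input to Lemma \ref{lemma1506} is also exactly how the paper uses it.
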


\begin{lemma} $\mathcal{R}(\widehat{\mathcal{S}})\cap \mathcal{B}_{\infty}={\mathcal{R}}\cap\mathcal{B}_{\infty}$. \end{lemma}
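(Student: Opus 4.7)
The plan is to prove the two inclusions separately. The inclusion $\mathcal{R}(\widehat{\mathcal{S}})\cap\mathcal{B}_{\infty}\subseteq\mathcal{R}\cap\mathcal{B}_{\infty}$ is immediate: every principal stem-set is in particular a stem-set and hence a covariant event, so $\widehat{\mathcal{S}}\subset\mathcal{R}$ and therefore $\mathcal{R}(\widehat{\mathcal{S}})\subset\mathcal{R}$; intersecting with $\mathcal{B}_{\infty}$ gives the claim.

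The reverse inclusion is the substantive direction, and I would obtain it by a direct application of lemma \ref{lemma1506}. Take $X=\mathcal{B}_{\infty}$, let $\sim$ denote order-isomorphism, and let $R=\lc{\mathcal{R}}\cap\mathcal{B}_{\infty}$. Then $(X,R)$ is a standard Borel space, because $\lc{\mathcal{R}}$ is standard Borel and $\mathcal{B}_{\infty}\in\lc{\mathcal{R}}$ by lemma \ref{lemma_09062103}. I would next identify the derived covariant $\sigma$-algebra $T$ of \eqref{eq1506} with $\mathcal{R}\cap\mathcal{B}_{\infty}$: since $\mathcal{B}_{\infty}$ is itself invariant under $\sim$, a Borel subset of $\mathcal{B}_{\infty}$ is closed under $\sim$ within $\mathcal{B}_{\infty}$ if and only if it is the restriction to $\mathcal{B}_{\infty}$ of a covariant event in $\lc{\Omega}$. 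Finally, take $F=\widehat{\mathcal{S}}\cap\mathcal{B}_{\infty}$. This family lies in $T$, it is countable because there are only countably many principal orders, and it separates $X$ up to $\sim$ by corollary \ref{cor1506}. Lemma \ref{lemma1506} then concludes that $F$ generates $T$, \emph{i.e.}\ the $\sigma$-algebra on $\mathcal{B}_{\infty}$ generated by $\widehat{\mathcal{S}}\cap\mathcal{B}_{\infty}$ equals $\mathcal{R}\cap\mathcal{B}_{\infty}$. By the standard trace-$\sigma$-algebra identity, the left-hand side is precisely $\mathcal{R}(\widehat{\mathcal{S}})\cap\mathcal{B}_{\infty}$, which closes the argument.

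Since the heavy lifting has already been packaged into lemma \ref{lemma1506} (which invokes Mackey's results on analytic/standard Borel spaces) and into the separation statement of lemma \ref{lemma_0706}, the only thing left to handle carefully here is the bookkeeping: verifying that $\mathcal{R}(\widehat{\mathcal{S}})\cap\mathcal{B}_{\infty}$ coincides with the $\sigma$-algebra generated inside the measurable space $(\mathcal{B}_{\infty},\lc{\mathcal{R}}\cap\mathcal{B}_{\infty})$ by the traces $\widehat{\mathcal{S}}\cap\mathcal{B}_{\infty}$, and symmetrically that $\mathcal{R}\cap\mathcal{B}_{\infty}$ is the covariant subalgebra of that trace space. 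Both identifications follow from the standard fact that trace $\sigma$-algebras commute with $\sigma$-generation together with the covariance of $\mathcal{B}_{\infty}$, so I do not anticipate any substantive obstacle beyond this check.
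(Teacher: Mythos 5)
Your proposal is correct and follows essentially the same route as the paper: both sides are identified with the $\sigma$-algebra generated on $\mathcal{B}_{\infty}$ by the traces $\widehat{\mathcal{S}}\cap\mathcal{B}_{\infty}$, using the standard-Borel-subspace observation, corollary \ref{cor1506} and lemma \ref{lemma1506} for ${\mathcal{R}}\cap\mathcal{B}_{\infty}$, and the compatibility of trace $\sigma$-algebras with generation for $\mathcal{R}(\widehat{\mathcal{S}})\cap\mathcal{B}_{\infty}$. The only cosmetic difference is that your generation argument already yields the full equality, so the separate easy-inclusion paragraph is redundant.
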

\begin{proof}
We will show that both the LHS and the RHS are generated by the family $\widehat{\mathcal{S}}\cap \mathcal{B}_{\infty}$, and the result follows.

We begin with the RHS. Note that $(\lc{\Omega},\lc{\mathcal{R}})$ is a Polish space (lemma 6 in \cite{Brightwell:2002vw}), and therefore its subspace $(\mathcal{B}_{\infty}, \lc{\mathcal{R}}\cap\mathcal{B}_{\infty})$ is a standard Borel space (definition 1 on p.71 in \cite{Mackey:1976}). The Borel space derived from $(\mathcal{B}_{\infty}, \lc{\mathcal{R}}\cap\mathcal{B}_{\infty})$ via the ``covariance'' property \eqref{eq1506} with respect to equivalence under order-isomorphism is equal to $(\mathcal{B}_{\infty}, {\mathcal{R}}\cap\mathcal{B}_{\infty})$ and contains the countable family of sets $\widehat{\mathcal{S}}\cap \mathcal{B}_{\infty}$. Together, corollary \ref{cor1506} and lemma \ref{lemma1506} imply that the family $\widehat{\mathcal{S}}\cap \mathcal{B}_{\infty}$ generates $(\mathcal{B}_{\infty}, {\mathcal{R}}\cap\mathcal{B}_{\infty})$.

For the LHS, note that if a Borel space $(X,S)$ is generated by a family ${F}$ then its Borel subspace $(Q,S\cap Q)$ is generated by the family ${F}\cap Q$.
\end{proof}

\begin{proof}[Proof of theorem \ref{theorem3}]
Consider some $\mathcal{E}\in \mathcal{R}$. Then $\mathcal{E}\cap\mathcal{B}_{\infty} \in {\mathcal{R}}\cap\mathcal{B}_{\infty}\implies \mathcal{E}\cap\mathcal{B}_{\infty}\in\mathcal{R}(\widehat{\mathcal{S}})\cap \mathcal{B}_{\infty}\implies \exists \mathcal{E}'\in \mathcal{R}(\widehat{\mathcal{S}})$ such that $\mathcal{E}'\cap\mathcal{B}_{\infty}=\mathcal{E}\cap\mathcal{B}_{\infty}$, \textit{i.e.} $\mathcal{E}\triangle \mathcal{E}' \subset \mathcal{B}^c_{\infty}$. \end{proof}

\section{Posts}\label{sec:posts}

In the special case when the epochs are separated by posts, the dynamics satisfy $\mu(\mathcal{P}_{\infty})=1$, where $\mathcal{P}_{\infty}\in\mathcal{R}$ is the event that the causet spacetime contains infinitely many posts, and the $\sigma$-algebra,
\begin{align} \mathcal{R}_p:=\{\mathcal{E} \in\mathcal{R}|\mathcal{P}_{\infty}^c\subseteq \mathcal{E}\text{ or }\mathcal{P}_{\infty}^c \subseteq \mathcal{E}^c\}\end{align}
exhausts the covariant events, by analogy to the discussion of \eqref{roguealgebra} and \eqref{18075}. The strictly stronger analogue of theorem \ref{theorem_sec_6_1} is,
\begin{theorem}\label{theorem_sec_6_2}$\mathcal{R}_p=\lc{\mathcal{R}}_p\cap{\mathcal{R}}$. \end{theorem}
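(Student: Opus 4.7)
The plan is to reproduce, one by one, the post-analogs of the four lemmas underpinning Theorem \ref{theorem_sec_6_1}: Lemma \ref{claim100}, Lemma \ref{claim11}, Lemma \ref{lemma_09062103}, and Lemma \ref{18072}. Every reference to breaks, $\mathcal{B}_\infty$, principal cylinder sets, and $\lc{\mathcal{R}}_b$ gets replaced by its post counterpart: posts, $\mathcal{P}_\infty$, the appropriate post-principal cylinder sets, and $\lc{\mathcal{R}}_p$. In parallel with $\lc{\mathcal{F}}$, I define $\lc{\mathcal{F}}_p \subset \lc{\Omega}$ to be the set of infinite causets whose restrictions $\lc{C}|_{[0,n]}$ carry no post-principal structure for all sufficiently large $n$, and $\Gamma^p_{\lc{C}_n}$ to be the corresponding atoms of $\lc{\mathcal{R}}_p$.

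For the inclusion $\lc{\mathcal{R}}_p \cap \mathcal{R} \subseteq \mathcal{R}_p$, the key step is the post-analog of Lemma \ref{claim100}: every $\lc{C} \in \mathcal{P}_\infty^c$ is order-isomorphic to some $\lc{D} \in \lc{\mathcal{F}}_p$. Let $N$ be the largest label of a post in $\lc{C}$ (take $N = -1$ if none exists). Iterating past $N$, at each candidate label $n$ at which the partial $\lc{D}|_{[0,n]}$ would still carry post-principal structure, I locate a later integer $n^*$ that is incomparable with the candidate post-element in $\lc{C}$ and that satisfies $n^* \not\succ n$; such $n^*$ exists precisely because the candidate element is not actually a post of $\lc{C}$. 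The shift-and-swap sending $n^* \mapsto n$ then mirrors Lemma \ref{claim100} verbatim. From this, Corollary \ref{cor3}'s covariance argument and the atom-propagation of Lemma \ref{claim11} transfer unchanged, propagating membership of any atom in a covariant event through the single-element atom $\Gamma^p_{\lc{C}_1}$.

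For the reverse inclusion $\mathcal{R}_p \subseteq \lc{\mathcal{R}}_p \cap \mathcal{R}$, I first replay Lemma \ref{lemma_09062103}: enumerate the finite causets containing no posts and express $\mathcal{P}_\infty = \bigcap_{l=1}^\infty \mathcal{P}_l$ via the same nested-union-of-cylinders formula, with segments separated by single-element posts instead of arbitrary breaks, placing $\mathcal{P}_\infty$ in $\lc{\mathcal{R}}_p$. Then for $\mathcal{E} \in \mathcal{R}$ with $\mathcal{E} \subseteq \mathcal{P}_\infty$, the nesting $\mathcal{P}_\infty \subseteq \mathcal{B}_\infty$ gives $\Theta \subseteq \mathcal{P}_\infty^c \subseteq \mathcal{E}^c$ (rogues have only finitely many breaks, hence only finitely many posts), so Lemma \ref{lemma0} places $\mathcal{E}$ in $\mathcal{R}(\mathcal{S})$. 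Each generator $stem(C_n) \cap \mathcal{P}_\infty$ rewrites as a countable union of post-principal cylinder sets intersected with $\mathcal{P}_\infty$, because in any post-infinite causet every finite stem sits inside the past-of-some-post-plus-that-post, a post-principal finite causet. Combined with $\mathcal{P}_\infty \in \lc{\mathcal{R}}_p$, this places $\mathcal{E}$ in $\lc{\mathcal{R}}_p$, and the case split extending this to general $\mathcal{E} \in \mathcal{R}_p$ is identical to the breaks case.

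The hard part will be the post-analog of Lemma \ref{claim100}. Being a post is strictly more rigid than being the top of a principal stem — it requires comparability with \emph{every} element, not merely with every element below — so the relabeling must defeat a potentially denser set of post-carrying restrictions, and must do so without inadvertently creating new post-structure at later restrictions as a byproduct of earlier swaps. The finite-post hypothesis is exactly what guarantees, at each stage, a swap partner satisfying both the labeling-consistency condition $n^* \not\succ n$ and the post-destroying incomparability condition; however, weaving these swaps together into a single order-isomorphism $g : \lc{C} \to \mathbb{N}$ that works simultaneously for all sufficiently large $n$ is the step where subtle bookkeeping is most likely to conceal errors.
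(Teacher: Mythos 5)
Your overall strategy is exactly the paper's: Theorem \ref{theorem_sec_6_2} is proved there by transferring the four lemmas behind Theorem \ref{theorem_sec_6_1} mutatis mutandis, via the dictionary ``post $\leftrightarrow$ principal break'' (a break whose past is a principal stem) and ``principal stem $\leftrightarrow$ doubly principal stem''. Your plan to redo the analogues of Lemmas \ref{claim100}, \ref{claim11}, \ref{lemma_09062103} and \ref{18072}, and your identification of the relabelling lemma as the delicate step, match this. One worry you raise is in fact unfounded: if $\lc{C}|_{[0,n]}$ is doubly principal then the candidate post $n-1$ has past exactly $[0,n-2]$, and by transitivity any later element above $n-1$ is above all of $[0,n-1]$; hence ``$n-1$ is not a post'' really is equivalent to ``some $m>n$ is incomparable with $n-1$'', and such an $m$ automatically satisfies $m\not\succ n$ as well. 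So your swap partner exists for essentially the reason you state, and the residual bookkeeping concern you flag is no worse than in the paper's own proof of Lemma \ref{claim100}.

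The genuine gap is in your generating family. You never pin down ``post-principal'', and in the one place you describe it concretely --- ``the past-of-some-post-plus-that-post'' --- you describe a merely \emph{principal} causet. A finite causet of that form cannot certify that its maximal element is a post (that depends on the future), and its cylinder set need not lie in $\lc{\mathcal{R}}_p$ at all: for example, the $3$-element causet with a $2$-antichain below a single maximum is principal but not doubly principal, and its cylinder set is not generated by the doubly principal cylinder sets. Taking these objects as generators would reproduce $\lc{\mathcal{R}}_b$ and prove the wrong theorem. The objects you need are the paper's \emph{doubly principal} causets: the past of the post, plus the post, plus one further element covering everything --- that extra element is precisely what witnesses the break above the post. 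With this correction your rewriting of $stem(C_n)\cap\mathcal{P}_{\infty}$ goes through: any causet in $cyl(\lc{D}_m)\cap\mathcal{P}_{\infty}$ has a post whose past has cardinality at least $m$, and adjoining the next element of that post's future yields a doubly principal stem containing $\lc{D}_m$ as a stem. The same substitution (doubly principal, not principal) must be made in your definitions of $\lc{\mathcal{F}}_p$ and of the atoms $\Gamma^p_{\lc{C}_n}$, together with a convention for causets that have no doubly principal stem at all (e.g.\ those beginning with a $2$-antichain, none of whose restrictions is ever doubly principal), which would otherwise escape your partition.
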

Here, $\lc{\mathcal{R}}_p$ denotes the $\sigma$-algebra generated by the cylinder sets associated with those principal causets $\lc{C}_n$ whose restriction $\lc{C}_n|_{[0,n-2]}$ is itself a principal causet. We call such a causet $\lc{C}_n$, and the cylinder set, order and stem-set associated with it, \textit{doubly principal}. The analogue of theorem \ref{theorem3} is, 
\begin{theorem}\label{theorem3:posts} Given $\mathcal{E}\in\mathcal{R}$, there exists some $\mathcal{E}'$ in the $\sigma$-algebra generated by the doubly principal stem-sets such that $\mathcal{E}\triangle\mathcal{E}'\subset\mathcal{P}_{\infty}^c$.\end{theorem}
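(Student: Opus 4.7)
The plan is to mirror the proof of Theorem \ref{theorem3} step for step, with breaks replaced by posts and principal covering orders replaced by doubly principal covering orders. The role played by $D_n$---the unlabeled past of the $n$-th break in $\lc{D}$---will be taken by $D_n^{*}$, the unlabeled past of the $n$-th \emph{post} of $\lc{D}$ \emph{together with the post itself}. Since every post is the unique maximum of its past-plus-itself, $D_n^{*}$ is always principal, so its covering $\widehat{D_n^{*}}$ is doubly principal---exactly the kind of object whose stem-sets generate the algebra appearing in Theorem \ref{theorem3:posts}.

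The key step is the analogue of Lemma \ref{lemma_0706}: for any $\lc{D}\not\cong\lc{E}\in\mathcal{P}_\infty$ we exhibit a doubly principal order $C$ with $\lc{D}\in stem(C)$ and $\lc{E}\notin stem(C)$. Let $d_1\prec d_2\prec\cdots$ and $e_1\prec e_2\prec\cdots$ be the posts of $\lc{D}$ and $\lc{E}$, and define $E_n^{*}$ analogously. The sequence $(D_n^{*})$ is a strictly increasing chain of principal stems of $\lc{D}$, and $D_n^{*}$ contains exactly $n-1$ interior posts; hence $D_n^{*}=E_k^{*}$ forces $n=k$, just as in the break case. This extracts an $n$ with $D_n^{*}\ne E_n^{*}$ for all sufficiently large indices, and by swapping $\lc{D}$ and $\lc{E}$ we may assume $|D_n^{*}|\ge|E_n^{*}|$. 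Take $C=\widehat{D_n^{*}}$. Then $\lc{D}\in stem(\widehat{D_n^{*}})$: $\lc{D}$ has further posts above $d_n$, so any minimal element $m$ in the future of the break above $d_n$ makes $D_n^{*}\cup\{m\}$ a doubly principal stem of $\lc{D}$ of order $\widehat{D_n^{*}}$. The proof that $\lc{E}\notin stem(\widehat{D_n^{*}})$ follows the combinatorial template of Lemma \ref{lemma_0706}: assuming a doubly principal stem of order $\widehat{D_n^{*}}$ inside $\lc{E}$, one uses the comparability of break-pasts in $\lc{E}$ (any two stems arising as pasts of breaks are nested, by a down-set plus break argument) to force $\widehat{E_n^{*}}$ to appear as the unique principal $|\widehat{E_n^{*}}|$-stem inside $\widehat{D_n^{*}}$, whence $E_n^{*}=D_n^{*}$, contradicting the choice of $n$.

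The remainder of the proof parallels section \ref{sec2} line by line. $\mathcal{P}_\infty$ is measurable by the analogue of Lemma \ref{lemma030523}, so $(\mathcal{P}_\infty,\lc{\mathcal{R}}\cap\mathcal{P}_\infty)$ is a standard Borel subspace of $(\lc{\Omega},\lc{\mathcal{R}})$ whose covariant quotient is $(\mathcal{P}_\infty,\mathcal{R}\cap\mathcal{P}_\infty)$. The countable family $\{stem(C)\cap\mathcal{P}_\infty: C\text{ is a doubly principal order}\}$ separates $\mathcal{P}_\infty$ up to order-isomorphism by the separation lemma above, and so by Lemma \ref{lemma1506} it generates $\mathcal{R}\cap\mathcal{P}_\infty$. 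For any $\mathcal{E}\in\mathcal{R}$ we conclude that $\mathcal{E}\cap\mathcal{P}_\infty$ lies in the $\sigma$-algebra generated by the doubly principal stem-sets, giving an $\mathcal{E}'$ there with $\mathcal{E}\triangle\mathcal{E}'\subset\mathcal{P}_\infty^c$. The hard part sits inside the separation lemma: in the break case the breaks of $\widehat{D_n}$ are intrinsic ordered-partition features that survive any stem embedding, whereas for posts the conclusion requires that the element playing the role of $d_n$ in an embedded doubly principal stem of $\lc{E}$ be an actual \emph{post} of $\lc{E}$, not merely the principal top of some stem. Bridging this gap needs break-past comparability applied inside $\lc{E}$ both below and above the candidate element, and this double application is where the post-specific care is concentrated.
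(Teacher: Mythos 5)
Your proposal is correct and is essentially the paper's own argument: the paper proves this theorem by exactly the mutatis mutandis substitution you carry out, since your $D_n^{*}$ (the past of the $n$-th post together with the post itself) is precisely the past of what the paper calls the $n$-th \emph{principal break}, whose covering order is doubly principal. The separation step you flag as delicate closes just as in Lemma \ref{lemma_0706}: uniqueness of the $|\widehat{E_n^{*}}|$-stem forces $\lc{D}$ to have a break with past $E_n^{*}$, and because $E_n^{*}$ is principal that break is automatically a principal break, giving $E_n^{*}=D_i^{*}$ and the contradiction.
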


Let us define a \textit{principal break} to be a break whose past is a principal stem. Then there is a post if and only if there is a principal break: the post is the maximal element of the past of the principal break. The proof of theorems \ref{theorem_sec_6_2} and \ref{theorem3:posts} can be obtained from the proofs of theorems \ref{theorem_sec_6_1} and \ref{theorem3} respectively, mutatis mutandis \textit{i.e.}  by replacing “break” with “principal break” and “principal stem” with “doubly principal stem”. So, for example, a segment is now defined to be the portion of the causet between two principal breaks and is always a principal causet.

Measures on ${\lc{\mathcal{R}}_p}$ correspond to walks up \textit{doubly-reduced poscau} whose description is obtained from the description of reduced poscau (see section \ref{subsec:reducedposc}) by replacing ``principal'' with ``doubly principal'' (so only doubly principal causets are contained in nodes of their own). Thus, cyclic models in which the epochs are separated by posts can be conceived of as random walks on this novel tree. Complex Transitive Percolation is such a cyclic model, and so we can extend the discussion in section \ref{sec:CTP} by asking whether there exists an extension of the measure to the strictly smaller $\sigma$-algebra $\lc{\mathcal{R}}_p\subset\lc{\mathcal{R}}_b$. We do so by modifying equations (\ref{eq_sumrule_sec5}-\ref{eq_04071901_2}) and the criteria for extension by replacing $\lc{\mathcal{R}}$ with $\lc{\mathcal{R}}_p$ and the $\lc{C}_n$ with the nodes of doubly-reduced poscau, but their application to Complex Transitive Percolation is inconclusive. On the one hand, we cannot prove that an extension exists since $\zeta$ is equal to $|p|+|1-p|-1$ on every doubly principal causet so $\zeta_n^{max}\geq |p|+|1-p|-1> 0$ for all $n$ when $p\not\in[0,1]$. On the other hand, we can cannot rule out an extension, since every level $n>1$ in doubly-reduced poscau contains nodes with valency equal to 1 and on these nodes $\zeta$ vanishes and therefore $\sum_n\zeta_n^{min}=0$.

We now amend the extension criteria to provide further scrutiny in the special case where there are nodes with valency equal to 1. Let $\mathscr{T}$ denote a finite-valency directed tree that  contains no maximal elements and let $\mathscr{T}_n \subset \mathscr{T}$ denote the set of nodes at level $n$. Let $D_n$ denote a node at level $n$.

Define, $S_n:= \sum_{D_n\in\mathscr{T}_n} |A(D_n)|$, and note that an extension exists if and only if $\sup_n S_n<\infty$ \cite{Surya:2020cfm}. 

Let $\overline{\mathscr{T}_n}\subset\mathscr{T}_n$ be the set of level $n$ nodes that  have valency greater than 1, and define,
\begin{align}\label{def230222}\begin{split}&\overline{\zeta_n^{min}}:=\min_{D_n\in\overline{\mathscr{T}_n}}\zeta(D_n),\\&\overline{S_{n}}:= \sum_{D_{n}\in\overline{\mathscr{T}_{n}}} |A(D_{n})|,\\
&S_{n}^{v=1}:=S_n-\overline{S_{n}}.
\end{split}\end{align}
Then one can use definitions \eqref{def230222} to generalise the proof of claim 3.2 in \cite{Surya:2020cfm} and thus show that,
\begin{align}\label{eq_lemma241_sec5_chap_6_3}
\begin{split}
S_n\geq & \prod_{r=1}^{n-1}(1+\overline{\zeta_{r}^{min}})-\sum_{r=1}^{n-1} \bigg[S_{n-r}^{v=1}\  \overline{\zeta_{n-r}^{min}}\  \prod_{i=1}^{r-1}(1+\overline{\zeta_{n-r+i}^{min}})\bigg].\\
\end{split}
\end{align}
Thus, if the right hand side of \eqref{eq_lemma241_sec5_chap_6_3} diverges with $n$ then no extension exists.

Applying this improved criterion to Complex Transitive Percolation on doubly-reduced poscau, our numerical solutions for $\overline{\zeta_n^{min}}$ as a function of $p$ for $n=2,3,4$ (Fig.\ref{plots_p}) suggest that for any $p\in\mathbb{C}$, there exists a level $m$ above which the function $\zeta$ restricted to the valency $>1$ nodes takes its minimum value on the doubly principal causets, \textit{i.e.} $\overline{\zeta_n^{min}}=|p|+|1-p|-1$ for all $n>m$. If this is borne out then, when $p\not\in[0,1]$, the first term on the right hand side of \eqref{eq_lemma241_sec5_chap_6_3} diverges as $n\rightarrow\infty$. Whether $\sup_n S_n=\infty$ depends on the behaviour of $S_{n-r}^{v=1}$ in the second term. We note that, since $S_{n-r}^{v=1}$ is the sum over absolute values of amplitudes of reaching a doubly principal causet by stage $n-r-1$, in future it could be computed using techniques similar to those used to obtain the probability of a post \cite{Bombelli:2008kr}.

\begin{figure}[h]
    \centering
\begin{subfigure}{.1\textwidth}
    \includegraphics[width=0.5\textwidth]{legend.pdf}
    \end{subfigure}\hfill%
    \begin{subfigure}{.4\textwidth}
    \includegraphics[width=.8\textwidth]{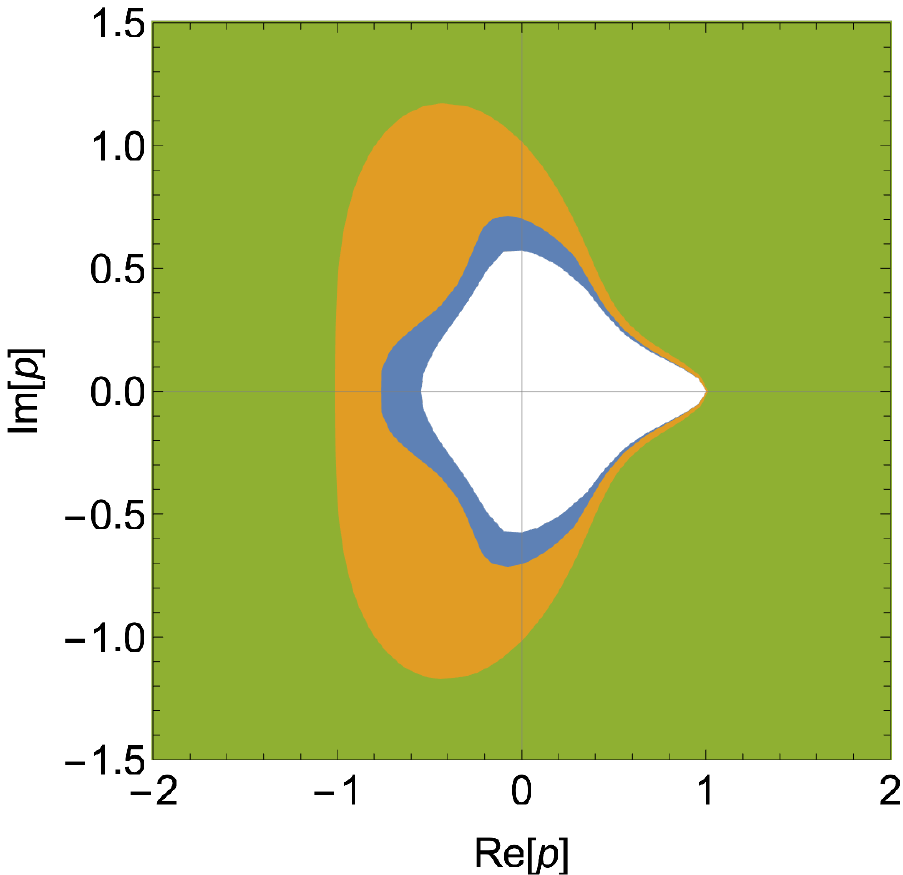}
   \caption{Doubly-reduced poscau.}\label{plot2}
    \end{subfigure}\hfill%
    \caption[Complex Transitive Percolation with posts]{
    The coloured regions indicate the values of $p\in\mathbb{C}$ for which  $\overline{\zeta_n^{min}}=|p|+|1-p|-1$ on doubly-reduced poscau for $n=2,3,4$ in Complex Transitive Percolation models.}%
    \label{plots_p}%
\end{figure}

\section{Conclusion}
In this work, we considered how the set of covariant observables can be distilled to a smaller exhaustive set of observables under the family of cyclic dynamics. This interplay between the kinematic constraint of covariance and the dynamic restriction to on-shell configurations (cyclic causets) gives rise to interrelated systems of sets with rich mathematical structure. In particular, we identified both $\mathcal{R}_b$ and $\mathcal{R}(\widehat{\mathcal{S}})$ as exhaustive observable $\sigma$-algebras, and it is natural to ask how the two are related. First, note that both are sub-algebras of $\mathcal{R}(\mathcal{S})$, since neither separates the set of rogues. Additionally, definition \eqref{18075} of $\mathcal{R}_b$ and theorem \ref{theorem3} combine to imply that: \textit{for any $\mathcal{E}\in\mathcal{R}_b$, there exists some $\mathcal{E}'\in \mathcal{R}(\widehat{\mathcal{S}})$ such that $\mathcal{E}\triangle\mathcal{E}'\subset\mathcal{B}_{\infty}^c$, and vice versa, for any $\mathcal{E}\in\mathcal{R}(\widehat{\mathcal{S}})$, there exists some $\mathcal{E}'\in\mathcal{R}_b $ such that $\mathcal{E}\triangle\mathcal{E}'\subset\mathcal{B}_{\infty}^c$}. Therefore, under cyclic dynamics the two algebras of observables are equal up to sets of measure zero and can be considered equivalent. However, at the level of the kinematics the two $\sigma$-algebras are very different since the only events that  they share are the unit and the empty set, as we now prove.


\begin{lemma}\label{18077}  Let $\mathcal{E}\in \mathcal{R}({\mathcal{S}})$. If $\mathcal{E}$ is contained in both $\mathcal{R}_b$ and $\mathcal{R}(\widehat{\mathcal{S}})$ then $\mathcal{E}$ is either the empty set or the unit element $\lc{\Omega}$.\end{lemma}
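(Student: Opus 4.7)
The plan is to prove the lemma by contradiction: I would assume $\mathcal{E}\in \mathcal{R}_b\cap \mathcal{R}(\widehat{\mathcal{S}})$ is nontrivial and exhibit a cyclic causet in $\mathcal{E}$ together with a non-cyclic companion having the same principal orders as stems, the latter of which would then have to lie in $\mathcal{E}$ too by its $\mathcal{R}(\widehat{\mathcal{S}})$-membership, contradicting $\mathcal{R}_b$-membership. First I would reduce to the case $\mathcal{E}\subseteq \mathcal{B}_{\infty}$: since both $\mathcal{R}_b$ and $\mathcal{R}(\widehat{\mathcal{S}})$ are $\sigma$-algebras, $\mathcal{E}^c$ lies in the intersection as well, and by the definition \eqref{18075} of $\mathcal{R}_b$ one of $\mathcal{E}$ or $\mathcal{E}^c$ is contained in $\mathcal{B}_{\infty}$; it suffices to show this containment forces emptiness.

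The heart of the argument is a construction. Given any $\lc{C}\in \mathcal{B}_{\infty}$, I would form a companion $\lc{C}^+\in\lc{\Omega}$ by adjoining a single new element $x$ that is incomparable to every element of $\lc{C}$ (for instance placing $x$ at position $0$ and shifting all labels of $\lc{C}$ up by one). I claim (a) $\lc{C}^+\in \mathcal{B}_{\infty}^c$, and (b) $\lc{C}$ and $\lc{C}^+$ contain exactly the same set of principal orders as stems. For (a), in any purported break $(\lc{A},\lc{B})$ of $\lc{C}^+$ the element $x$ must belong to one of the two parts, but the condition $a\prec b$ for all $a\in\lc{A},b\in\lc{B}$ combined with the incomparability of $x$ forces the opposite part to be empty, so $\lc{C}^+$ has no nontrivial break at all. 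For (b), every finite down-set of $\lc{C}^+$ is either a finite down-set $D$ of $\lc{C}$ (whose principality is unchanged) or has the form $D\cup\{x\}$; in the latter case, if $D\ne\emptyset$ then a maximal element of $D$ together with $x$ are both maximal in $D\cup\{x\}$, so $D\cup\{x\}$ is not principal, while if $D=\emptyset$ then $\{x\}$ represents the $1$-element order, which is already a principal stem of $\lc{C}$ via any minimal element.

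With the construction in hand, the conclusion follows by a standard $\sigma$-algebra argument. The collection of sets in $\lc{\mathcal{R}}$ that either contain both $\lc{C}$ and $\lc{C}^+$ or contain neither is a $\sigma$-algebra, and by claim (b) it contains every principal stem-set in $\widehat{\mathcal{S}}$, hence contains all of $\mathcal{R}(\widehat{\mathcal{S}})$. If $\mathcal{E}$ were nonempty I could therefore pick any $\lc{C}\in \mathcal{E}\subseteq \mathcal{B}_{\infty}$ and conclude $\lc{C}^+\in \mathcal{E}$, contradicting $\mathcal{E}\subseteq \mathcal{B}_{\infty}$ by claim (a).

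The hard part will be the construction and the verification of its two properties: one must find a single modification that simultaneously destroys every break and preserves the full catalogue of principal stem-orders. Adding an incomparable element achieves both at once, because (i) the total-ordering requirement defining a break is incompatible with incomparability of a new element, and (ii) an incomparable element cannot combine with any nonempty down-set to form a new unique-maximal-element stem, so the only principal-order enlargement it could introduce is the $1$-element order which is always already present. Once these two observations are in place the remaining $\sigma$-algebra manipulations are routine.
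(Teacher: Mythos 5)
Your argument is correct, and its overall strategy is the same as the paper's: exhibit, for each cyclic causet $\lc{C}\in\mathcal{B}_{\infty}$, a non-cyclic companion that no event of $\mathcal{R}(\widehat{\mathcal{S}})$ can separate from $\lc{C}$, then play this off against the dichotomy $\mathcal{E}\subseteq\mathcal{B}_{\infty}$ or $\mathcal{E}^c\subseteq\mathcal{B}_{\infty}$ defining $\mathcal{R}_b$. The difference lies in the companion. The paper takes $\lc{C}'$ to be a labeling of the disjoint union of the pasts $C_i$ of all the breaks of $\lc{C}$, and argues that $\bigcap_i stem(\widehat{C_i})$ is the smallest event of $\mathcal{R}(\widehat{\mathcal{S}})$ containing either causet; verifying that $\lc{C}$ and $\lc{C}'$ have the same principal stems requires noting that every finite down-set of a cyclic causet sits inside some break-past and that a principal down-set of a disjoint union must live in a single component. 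Your companion $\lc{C}^+$ --- $\lc{C}$ with one incomparable element adjoined --- makes both verifications essentially immediate: the new element kills every break outright (it cannot sit on either side of a total bipartition), and it cannot enlarge the catalogue of principal stems because any nonempty down-set together with it acquires a second maximal element, the only exception being the singleton order, which is already a stem of $\lc{C}$ via the element $0$. Your "contains both or neither" good-sets argument is also the cleaner way to phrase the paper's appeal to smallest events, since it avoids having to identify the atom of $\mathcal{R}(\widehat{\mathcal{S}})$ explicitly. The trade-off is purely expository: the paper's disjoint-union companion illustrates that the indistinguishable partner can be made maximally non-cyclic and connects to the rogue constructions elsewhere in the literature, while your one-point modification is the minimal perturbation that does the job and keeps every verification a one-liner.
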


\begin{proof}
Consider some causet $\lc{C}\in\mathcal{B}_{\infty}$ and let $C_i$ denote the unlabeled past of the $i^{th}$ break in $\lc{C}$, so $C_i$ is an order which is a stem in $\lc{C}$. 
Define $\lc{C}'$ to be the causet which is some labeling of the disjoint union of the $C_i$'s. The event $\bigcap_i stem(\widehat{C_i})$ is the smallest event in $\mathcal{R}(\widehat{\mathcal{S}})$ that contains $\lc{C}$, in the sense that if $\lc{C}\in\mathcal{E}\in\mathcal{R}(\widehat{\mathcal{S}})$ then $\bigcap_i stem(\widehat{C_i})\subset\mathcal{E}$. The event $\bigcap_i stem(\widehat{C_i})$ is also the smallest event in $\mathcal{R}(\widehat{\mathcal{S}})$ that contains $\lc{C}'$. Therefore any $\mathcal{E}\in\mathcal{R}(\widehat{\mathcal{S}})$ contains either both or neither of $\lc{C}$ and $\lc{C}'$.

Suppose $\mathcal{E}\in \mathcal{R}_b$. By definition of $\mathcal{R}_b$, either $\mathcal{E}\subseteq\mathcal{B}_{\infty}$ or $\mathcal{B}_{\infty}^c\subseteq\mathcal{E}$. We show that in both cases, $\mathcal{E}\not\in\mathcal{R}(\widehat{\mathcal{S}})$, which completes the proof.

\underline{Case(i): $\mathcal{E}\subseteq\mathcal{B}_{\infty}$ and $\mathcal{E}\not=\emptyset$.} Note that $\mathcal{E}$ contains some $\lc{C}\in\mathcal{B}_{\infty}$. Therefore, if $\mathcal{E}\in \mathcal{R}(\widehat{\mathcal{S}})$ then $\mathcal{E}$ contains $\lc{C}'\not\in\mathcal{B}_{\infty}$. Contradiction.

\underline{Case(ii): $\mathcal{B}_{\infty}^c\subseteq\mathcal{E}$ and $\mathcal{E}\not=\lc{\Omega}$.} Note that there exists some causet $\lc{C}\in\mathcal{B}_{\infty}$ that  is not contained in $\mathcal{E}$ and thefore $\lc{C}'\not\in\mathcal{E}$. Contradiction. \end{proof}

In particular, lemma \ref{18077} implies that $\mathcal{B}_{\infty}\not\in\mathcal{R}(\widehat{\mathcal{S}})$. Therefore one cannot tell from the measure on $\mathcal{R}(\widehat{\mathcal{S}})$ whether the dynamics is cyclic, however if one knows the dynamics is cyclic then $\mathcal{R}(\widehat{\mathcal{S}})$ is an exhaustive set of observables. On the other hand, since $\mathcal{B}\in\mathcal{R}_b$, knowing the measure on $\mathcal{R}_b$ is sufficient to determine whether the dynamics is cyclic.

Finally, our motivation for studying cyclic dynamics has been the key role that they play in the the causal set comological paradigm that  aims to explain the emergence of a flat, homogeneous and isotropic cosmos directly from the quantum gravity era \cite{Sorkin:1998hi,Martin:2000js,Dowker:2017zqj}. But what form do these models take and do they occupy a significant volume in theory space? The Transitive Percolation family of cyclic CSG model \cite{Alon:1994, Rideout:1999ub} has served as a starting point for searches of cyclic models and some progress has been made in this direction: a class of cyclic dynamics in which epochs are separated by posts has been identified in \cite{Ash:2005za} and a conjecture of another class of such models has been put forward in \cite{Brightwell:2016}.  
But these formal results are yet to be fully understood and implemented (\textit{e.g.}, via computer simulations) and the characterisation of cyclic dynamics remains an important open question. 

\textbf{Acknowledgements:} This research was supported in part by STFC grant ST/T000791/1. It was also
supported in part by Perimeter Institute for Theoretical Physics. Research at Perimeter
Institute is supported by the Government of Canada through Industry Canada and by
the Province of Ontario through the Ministry of Economic Development and Innovation.
\bibliography{observ_bib}{}

\end{document}